\newcommand{\ie}{\emph{i.e.,}\xspace}
\newcommand{\eg}{\emph{e.g.,}\xspace}
\newcommand{\resp}{\emph{resp.,}\xspace}
\newcommand{\etal}{\emph{et al.}\xspace}
\newcommand{\eat}[1]{}
\theoremstyle{remark}
\newtheorem{remark}{Remark}
  \providecommand\BibTeX{{%
    \normalfont B\kern-0.5em{\scshape i\kern-0.25em b}\kern-0.8em\TeX}}}
\begin{document}

\title{DP-starJ: A Differential Private Scheme towards Analytical Star-Join Queries}

\author{Congcong Fu}
\author{Hui Li}
\affiliation{
 \institution{Xidian University}
 \city{Xi'an}
 \country{China}
}
\author{Jian Lou}
\affiliation{
\institution{Zhejiang University}
\city{Hangzhou}
\country{China}
}
\author{Jiangtao Cui}
\affiliation{
\institution{Xidian University}
\city{Xi'an}
\country{China}
}
\begin{abstract}
  Star-join query is the fundamental task in data warehouse and has wide applications in On-line Analytical Processing (\textsc{olap}) scenarios. Due to the large number of foreign key constraints and the asymmetric effect in the neighboring instance between the fact and dimension tables, even those latest \textsc{dp} efforts specifically designed for join, if directly applied to star-join query, will suffer from extremely large estimation errors and expensive computational cost.
  
 In this paper, we are thus motivated to propose DP-starJ, a novel \textbf{D}ifferentially \textbf{P}rivate framework for \textbf{star}-\textbf{J}oin queries. DP-starJ consists of a series of strategies tailored to specific features of star-join, including 1) we unveil the different effect of fact and dimension tables on the neighboring database instances, and accordingly revisit the definitions tailored to different cases of star-join; 2) we propose Predicate Mechanism (PM), which utilizes predicate perturbation to inject noise into the join procedure instead of the results; 3) to further boost the robust performance, we propose a \textsc{dp}-compliant star-join algorithm for various types of star-join tasks based on PM. We provide both theoretical analysis and empirical study, which demonstrate the superiority of the proposed methods over the state-of-the-art solutions in terms of accuracy, efficiency, and scalability.
\end{abstract}

\begin{CCSXML}
<ccs2012>
   <concept>
       <concept_id>10002978.10002991</concept_id>
       <concept_desc>Security and privacy~Security services</concept_desc>
       <concept_significance>500</concept_significance>
       </concept>
   <concept>
       <concept_id>10002951.10002952.10003219.10003242</concept_id>
       <concept_desc>Information systems~Data warehouses</concept_desc>
       <concept_significance>500</concept_significance>
       </concept>
   <concept>
       <concept_id>10002951.10002952.10002953.10002955</concept_id>
       <concept_desc>Information systems~Relational database model</concept_desc>
       <concept_significance>300</concept_significance>
       </concept>
   <concept>
       <concept_id>10002978.10003018.10003019</concept_id>
       <concept_desc>Security and privacy~Data anonymization and sanitization</concept_desc>
       <concept_significance>100</concept_significance>
       </concept>
 </ccs2012>
\end{CCSXML}

\ccsdesc[500]{Security and privacy~Security services}
\ccsdesc[500]{Information systems~Data warehouses}
\ccsdesc[300]{Information systems~Relational database model}
\ccsdesc[100]{Security and privacy~Data anonymization and sanitization}


\keywords{star-join, data warehouse, differential privacy, local sensitivity}




\maketitle

\section{Introduction}

\textit{Star-join} query is a common type of query in data warehouse applications, especially on star schema warehouse, where a fact table is joined with one or more dimension tables. It usually performs some filtering on dimension tables, joins the dimension tables with the fact table, and executes some optional aggregation on that. The following provides the formal definition of star-join query.
\begin{definition}[Star-Join]\label{df:11}
Let $\mathbf{R}$ be a database schema containing $n+1$ tables, namely $R_0,...,R_n$. We start with a star-way join :
\begin{equation}
    J := R_0(\mathbf{x}_0) \Join R_1(\mathbf{x}_1) \Join ...\Join R_n(\mathbf{x}_n),
\end{equation}
where $R_1,...,R_n$ are dimension tables and $R_0$ is a fact table. We use $[n]$ to denote $\{1,...,n\}$ and each $\mathbf{x}_i(i \in [n])$ of dimension table $R_i(i\in [n])$ consist of a join key $k_i$ and attribute $a_i$, $\mathbf{x}_i =\{k_i, a_i\}$. Yet $\mathbf{x}_0$ consist of all join keys $k_i$ and a measure attribute $a_0$, $\mathbf{x}_0 =\{k_1,..., k_n, a_0\}$. Let $var(J) := a_0 \cup ... \cup a_n$ is a set of variables in the join result $J$. Each attribute $a_i$ has a finite domain $dom(a_i)$ with size $|dom(a_i)| = m_i$, the full domain of $\mathbf{R}$ is $dom(\mathbf{R}) = dom(a_0) \times ... \times dom(a_n)$ and has size $m = \prod_i m_i$.    
\end{definition}

\begin{figure}[t]
  \centering
  \includegraphics[width=0.8\linewidth, height = 4.2cm]{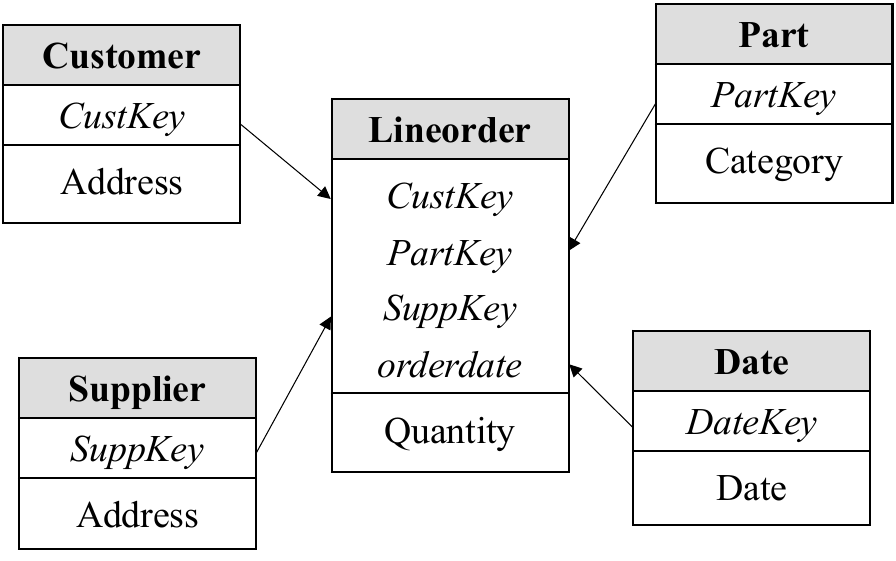}\vspace{-2ex}
  \caption{An example of star schema with 4 dimension tables}\label{fig_0}\vspace{-3ex}
\end{figure}

\begin{example}\label{ex1}
Suppose a data analyst is interested in the total number of items sold in the first half of this year in a given region, s/he would execute the following query (assuming the query is performed on the Star Schema 
 Benchmark (\textsc{ssb})~\cite{o2007star}): 
\begin{lstlisting}[language=SQL,keywordstyle=\color{blue},mathescape,basicstyle=\ttfamily,escapeinside=\{\},showstringspaces=false]
SELECT count($*$)
FROM {Date}, Customer, Supplier, Part, Lineorder 
WHERE Lineroder.CK = Customer.CK 
    AND Lineroder.SK = Supplier.SK
    AND Lineroder.PK = Part.PK 
    AND Lineorder.orderdate = {Date}.DK
    AND Customer.region = '[REGION]'
    AND Supplier.region = '[REGION]'
    AND {Date}.{month} < 7;  
\end{lstlisting} 
\end{example}
Figure~\ref{fig_0} shows an example of standard star schema where the query in Definition~\ref{df:11} can be applied and Example~\ref{ex1} shows an example of a star-join query. Such queries involving star-joins are ubiquitous within analytical tasks and act as a core query category in the data warehouse. Due to such a pivotal role in data analytics,  star-join has been extensively studied in the literature and widely applied in On-line Analytical Processing (\textsc{olap}) practice. Meanwhile, unlike the other types of joint queries that \emph{all tables} can be connected to each other, \emph{all dimension tables} in star-join will be directly linked to the fact table through the foreign-key constraints. In the above example, the relations touched by the query contain private information, \eg customer $c_1$ has placed a particular order $o_2$, suppliers $s_1,\ldots,s_m$ provide an item $i_1$, of which the privacy must be protected in practical scenarios.

At present, differential privacy (\textsc{dp}) has become a popular solution in privacy-preserving data analytics as it provides a statistically rigorous privacy guarantee. Since its introduction \cite{dwork2006differential,dwork2006calibrating}, \textsc{dp} has attracted ever-growing interest in academia, government agencies, and industry. The standard \textsc{dp} mechanism (e.g., \textit{Laplace Mechanism}) first finds the global sensitivity of the query, then it adds a carefully calibrated random noise tailored to the query result. 
High sensitivity can introduce large noise, which results in a distorted query result offering poor utility. In particular, the global sensitivity of the query refers to how much the query result may change in two neighboring instances of databases. Consequently, a proper definition of neighboring instances is of great importance in \textsc{dp}, which not only determines whether the \textsc{dp} mechanism built on it offers sound and practical privacy protection, but also affects the sensitivity and eventually the utility of the \textsc{dp} mechanism.
Since such noise for the privacy-preserving purpose will unavoidably cause utility degradation for the query result, a central problem in \textsc{dp} is how to achieve a satisfactory trade-off between privacy and utility. 
Existing works \cite{kasiviswanathan2013analyzing, wang2019locally,zheng2022secure,xu2013differentially,liew2022network,qardaji2013understanding,takagi2021p3gm} have proven that \textsc{dp} mechanism usually achieves a better privacy-utility trade-off when its design is tailored to the specific data analysis task under consideration. 
In this regard, this paper, for the first time, proposes a solution towards answering star-join queries under differential privacy. 

Roughly, the efforts of recent works in \textsc{dp} query processing~\cite{dong2021residual, kotsogiannis2019privatesql, tao2020computing} focus on three aspects to reduce the high global sensitivity: reduce query sensitivity by utilizing the upper bound of the local sensitivity, design an algorithm that effectively computes tight local sensitivity, and transform the database instance by deleting some tuples that are highly sensitive. However, \textit{different from other types of queries that all tables can be linked through join operations, the star-join query has a non-trivial number of foreign key constraints that a single fact table references a series of dimension ones}. Due to that, answering a star-join query in a \textsc{dp} manner is more challenging because high global sensitivity results from the large number of foreign key constraints in star-join query, the \textsc{dp} mechanism of high global sensitivity fails to work, as the output of a join may contain duplicate sensitive rows. This duplication is difficult to bound as it depends on the join type, join condition, and the underlying data. Therefore, the global sensitivity becomes unbounded when joins are present because a single tuple may affect many join results. Therefore, the existing \textsc{dp}-compliant query strategies with a trusted server may not be able to provide satisfactory \textit{utility} and \textit{efficiency}, which motivates us to present the solutions in this work.
\begin{example}\label{ex2}
The following is a simplest star-join query: \\ $q:= Customer(\underline{CK},Address,...) \Join Lineorder(CK,orderdate,...)$. 
\end{example}
Here, $Customer$ may store customer information and $Lineorder$ contains the orders the customers have placed. Then this query simply returns the total numbers of orders. Suppose the identities for the entities in $Customer$ are private information we aim to protect. Unfortunately, the global sensitivity of this query is $\infty$ under existing DP solutions~\cite{dwork2006calibrating}. The reason is as follows, a customer could have an unbounded number of orders, and adding such a customer to the database can cause an unbounded change in the query result theoretically. To address this issue, some works~\cite{nissim2007smooth, dong2021residual} suggest adding data-dependent noise to the query result. For instance, \cite{nissim2007smooth} proposes to use the local sensitivity, \ie the sensitivity of the join query on the given database instance, which is usually much lower than global sensitivity. However, if applied in star-join query, it still leads to high sensitivity and further results in a low utility. The key challenge is how to decrease the global sensitivity of the star-join queries when designing the \textsc{dp} schemes. 

Meanwhile, within star-join the tuples from the fact and the dimension tables shall affect the query result differently. For this reason, there also exist several different cases for neighboring database instances depending on whether the fact or the dimension tables are private. Therefore, before presenting a well-designed \textsc{dp}-compliant star-join solution, \textit{it is necessary to revisit the definition of neighboring database instances due to the asymmetry between the fact and dimension table}. Accordingly, the \textsc{dp}-compliant star-join solution should take into account the fact that the definition of neighboring database instances may vary between scenarios.

In this paper, we systematically investigate the differential privacy star-join query problem. Our study first reveals that the existing approaches of the traditional \textsc{dp}-compliant join schemes~\cite{dong2021residual,tao2020computing,johnson2018towards}, which work by adding subtly noise to the join \textit{result}, fail to achieve a satisfactory utility and efficiency in star-join queries. 
Thus, we are further motivated to propose an advanced approach called DP-starJ, a \textbf{D}ifferentially \textbf{P}rivate framework for \textbf{star}-\textbf{J}oin queries. To achieve that, we first investigate and unveil the asymmetry between the fact and dimension tables in the effect on neighboring instances of star-join. Driven by that unique nature, instead of considering a uniform definition and simplified case for neighboring database instance as existing \textsc{dp} schemes~\cite{dong2022r2t, dong2021residual}, we propose a fine-grained definition for neighboring database instance tailored to the asymmetry characteristics of star-join task. On the other hand, as discussed above, truncating some highly sensitivity tuples or adding data-dependent noise towards the result fails to achieve a satisfactory utility and efficiency due to the large number of foreign keys, we are also motivated to propose a new perturbation mechanism to achieve superior utility and efficiency by adding the data-independent noise with bounded global sensitivity, namely Predicate Mechanism. Using the proposed mechanism as a building block, we present an \textsc{dp}-compliant star-join algorithm for various types of star-join tasks (\ie aggregate query,  ``group\_by'' operation, and workload queries). Further, our theoretical study shows that the proposed methods obtain asymptotically optimal error bound on star-join. Empirical study over several real-world datasets justifies the superiority of our solution in the aspect of both utility and efficiency across various star-join tasks.

The contributions of this paper are summarized as follows:
\begin{itemize}
    \item We unveil the asymmetry between the fact and dimension tables in affecting the neighboring database instances and revisit the accordingly definitions to tailor to different cases of star-join. 
    \item We propose a Predicate Mechanism under DP-starJ, which designs a new perturbation strategy to inject noise towards the star-join procedure instead of purely the results. Meanwhile, we further propose an \textsc{dp}-compliant star-join algorithm for various types of star-join tasks.
    \item We prove theoretically that the proposed method obtains asymptotically optimal error bound on star-join queries and experimental study justifies the superiority of our solution in the aspects of both utility and efficiency.
\end{itemize}

\vspace{-2ex}\section{Related Work}
Early works mostly focus on answering a given arbitrary SQL query under \textsc{dp}, which is acknowledged as the holy grail of private query processing. There have been several works on answering various types of queries under \textsc{dp} \cite{barak2007privacy,narayan2012djoin, zeighami2021neural, cormode2012differentially, kato2022hdpview} but not star-join, which has always been the core and basis for the majority of \textsc{olap} applications~\cite{galindo2008optimizing}. At present, there is no work that is specifically designed to answer the star-join query in a privacy-preserving manner under trusted server settings. Since the elegant work by Dwork \cite{dwork2006differential}, there are plenty of works \cite{tao2020computing, dong2021residual,dong2022nearly,cai2023privlava} proposed to limit the sensitivity of join queries and extensions for optimizing multi-join queries. In addition, \textsc{dp}-compliant SQL query processing has also been extensively applied in industrial systems, for instance, Uber implements Flex \cite{johnson2018towards} that answers SQL queries with \textsc{dp}. 

Many technologies have been proposed to answer set counting queries over a single relation with different predicates \cite{barak2007privacy, blasiok2019towards, day2016publishing, hardt2012simple, nikolov2013geometry, xiao2010differential, qardaji2013understanding, zhang2014towards, qardaji2014priview}. Most existing work on join queries can only support restricted types of joins, such as joins with primary keys \cite{arapinis2016sensitivity, mcsherry2009privacy, narayan2012djoin, palamidessi2012differential, proserpio2012calibrating} and joins with a fixed join attribute \cite{wilson2020differentially}. One approach is to reduce the high sensitivity of join queries by truncation. For instance, PrivateSQL \cite{kotsogiannis2019privatesql} uses naive truncation to truncate the tuples with high degrees. Tao \etal \cite{tao2020computing} use naive truncation to truncate the tuples with high sensitivity for some queries without self-join and they propose a mechanism to select the truncation threshold. Dong \etal \cite{tao2020computing} proposed a mechanism Race-to-the-Top (R2T), which can be used to adaptively choose the truncation threshold. However, if applied in star-join, the truncation-based solution will cause a significant biased result due to the foreign key constraints between the large number of dimension tables and fact table. Another approach is adding data-dependent noise calibrated by other types of sensitivity rather than global sensitivity. Smooth sensitivity~\cite{nissim2007smooth} is a popular approach for dealing with multi-way joins. Elastic sensitivity~\cite{johnson2018towards} and residual sensitivity~\cite{dong2021residual}, both of which are efficiently computable versions of smooth sensitivity, can handle join queries efficiently. However, smooth sensitivity (including any efficiently computable version) cannot support foreign key constraints, which are important to model the relationship between an individual and all his/her associated records in a relational database. Similarly, these methods cannot balance utility and efficiency in the star-join query under \textsc{dp}.

In comparison, the star-join queries have \textit{a non-trivial number of foreign key constraints} in multi-way joins scenarios and the goal is to effectively get accurate query answers even when the star-join query contains a large number of dimension tables. The existing \textsc{dp}-compliant query strategies with trusted servers do not satisfy the practical requirements, which motivates us to present the solutions in this work.

\vspace{-2ex}\section{Preliminaries and Problem Definition}
\subsection{Preliminaries}
\noindent{\bfseries Star-join query} 

Many relational data warehouse designs today follow a so-called dimensional modeling approach that has been made popular by Galindo \etal \cite{galindo2008optimizing}. Dimensional modeling relies on the distinction of dimension tables with relatively static information in contrast to fact tables that store transactional statistical information. For instance, according to the TPC-H benchmark schema, dimensional table hold master data representing entitles such as \textit{part}, \textit{customers}, \textit{suppliers}, and \textit{date}. In comparison, the fact table in turn stores transactional data, \eg \textit{lineorder} contains statistics about sales or orders. Dimension tables and fact tables are correlated with each other by foreign key constraints. Usually, fact tables are several orders of magnitude larger than the dimension ones. Dimensional modeling leads to the well-known so-called star schema and snowflake schema design for data warehousing. A star schema consists of a fact table in the center of the star, and it is very popular for modeling data warehouses and data marts. The fact table contains foreign keys, which are pointing to the dimension tables, and the dimension tables contain a key used to joining with the fact table and additional attributes. 


Star-join queries are queries on a database instance that the fact table is joined with one or more dimension tables, it selects several measures of interest from the fact table, joins the fact rows with one or several dimensions with respect to the keys, places filter predicates on the business columns of the dimension tables, performs grouping if required, and finally aggregates the measures retrieved from the fact table. As the star-join query in \textsc{olap} task places filter predicates on the attributes of the dimension tables, and finally aggregates the measure attribute from the fact table, the star-join query can be converted into a predicate query. Predicate queries are a versatile class, consisting of queries that satisfying any logical predicate. A predicate corresponds to a condition in the \textsf{WHERE} clause of an SQL statement, and a star-join query is a SQL query with aggregation on measure attributes of the fact table and a predicates with equality and range constraints on some dimension tables. The following showcase the template for star-join queries in the form of a standard predicated \textsf{SELECT} SQL statement: \begin{lstlisting}[language=SQL,keywordstyle=\color{blue},mathescape,escapeinside=\{\},showstringspaces=false,xleftmargin=.05\textwidth]
SELECT Aggr(*) FROM R WHERE $\Phi$;
\end{lstlisting}   
\textbf{Aggr($\ast$)} refers to an aggregate function (\eg COUNT, AVG, SUM) over the fact table. $\Phi$ means conjunctions of filter conditions that consists of arbitrary predicates $\phi$ on attributes over the dimension tables. When a star-join query refers only to an attribute $a_i(i \in [n])$ in dimension table $R_i$ we may say that it is defined with respect to $a_i$ and annotate it as $\Phi := \phi_{a_i}, (\phi_{a_i} : dom(a_i) \rightarrow \{0,1\})$. Similarly, if $\phi_{a_i}$ and $\phi_{a_j}$ are predicates on dimension tables $R_i$ and $R_j$ in a star-join query, then $\Phi$ is the conjunctions of predicates $\Phi := \phi_{a_i} \wedge \phi_{a_j}, (\phi_{a_i} \wedge \phi_{a_j} : dom(a_i \cup a_j) \rightarrow \{0,1\})$.

Let $\mathbf{D}_s$ be a database instance over star schema and a star-join query $Q$ aggregates over the join result $J(\mathbf{D}_s)$. More abstractly, let $\Phi : dom(var(J)) \rightarrow \{0,1\}$ be an indicator function and the join result satisfy the filter predicate, and $\mathbf{w}(t)$ assigns a non-negative integer weight to the join results only depending on the tuple $t$. Given the above, we denote the query result of $Q$ on $\mathbf{D}_s$ as $Q(\mathbf{D}_s)$, which can be formally represented as follows.
\begin{equation}\label{eq1}
    Q(\mathbf{D}_s) = \sum_{t \in J(\mathbf{D}_s)} \Phi(t) \cdot \mathbf{w}(t)
\end{equation}
Note that the function $\Phi$ only depends on the star-join query and $t$ is the tuple in join result $J(\mathbf{D}_s)$. In addition, a star-join query with arbitrary predicate over $var(J)$ can be easily incorporated into this formulation (boolean function): If some $t \in J(\mathbf{D}_s)$ does not satisfy the predicate, we simply set $\Phi(t) = 0$. For a counting query, \textbf{Aggr($\ast$)} will appear in the form of a \textsf{COUNT} function, $\mathbf{w}(\cdot) = 1$; for other aggregation query, \eg SUM($a_0$), \textbf{Aggr($\ast$)} refers to a \textsf{SUM} function, $\mathbf{w}(t)$ is the value of attribute $a_0$ for $t$.

Consider the star-join query towards a database instance in Example~\ref{ex1}. The query consists of a set of single-table predicates as follows: in the \textit{Date} table, define predicate $\phi_{Date} = \{\mathbb{I}[t_{month} < 7] \}$, $\phi_{Supp} = \{\mathbb{I}[t_{region}  =  {\rm REGION}]\}$ and $\phi_{Cust} = \{\mathbb{I}[t_{region}  =  {\rm REGION}] \}$ in \textit{Supplier} and \textit{Customer} tables. The composite predicate for the query can be expressed as the product $\Phi : \phi_{Date} \wedge \phi_{Cust} \wedge \phi_{Supp}$.


\noindent{\bfseries Differential Privacy in Relational Databases with Join Query}
Differential Privacy (\textsc{dp}) provides a rigorous privacy guarantee, which has become the de facto privacy-preserving notion in many applications. Before presenting the formal definition of \textsc{dp}, we shall introduce the notion of the neighboring database first. For two database instances $\mathbf{D}$ and $\mathbf{D}'$, the distance between $\mathbf{D}$ and $\mathbf{D}'$, denoted $d(\mathbf{D},\mathbf{D}')$, is the minimum number of steps on which they differ. If $d(\mathbf{D},\mathbf{D}')=1$, we call $\mathbf{D}$, $\mathbf{D}'$ neighboring database instances.

\begin{definition}[Differential Privacy] 
	A randomized algorithm $\mathcal{A}$ satisfies $(\epsilon,\delta)$ - differential privacy, where $\epsilon, \delta > 0$, if for any pair of neighboring databases $\mathbf{D}$, $\mathbf{D}'$  and any output range $\mathcal{S} \subseteq  Range(\mathcal{A})$,
  \begin{equation}
    Pr[\mathcal{A}(\mathbf{D}) \in \mathcal{S}]\leqslant e{^\epsilon}\cdot Pr[\mathcal{A}(\mathbf{D}') \in \mathcal{S}] + \delta,
  \end{equation} 
  where the probability is taken over the randomness of $\mathcal{A}$. When $\delta = 0$, it is referred to as pure differential privacy, the algorithm $\mathcal{A}$ satisfies $\epsilon$ - differential privacy.
\end{definition}
In the above definition, $\epsilon$ refers to the privacy budget, which directly restricts the degree of the privacy protection of the algorithm $\mathcal{A}$. Typically, a smaller value of $\epsilon$ corresponds to a stronger privacy guarantee. In addition, $\delta$ should be much smaller than $1/N_\mathbf{D}$ to ensure the privacy of each individual record, where $N_\mathbf{D}$ refers to the size of the database instance.

Differential privacy is usually achieved by adding random noise drawn from a certain zero-mean probability distribution to the query result. Notably, the magnitude of the random perturbation positively correlates with the difference between the query results on $\mathbf{D}$ and $\mathbf{D}'$, which refers to the notion of sensitivity. The most basic framework for achieving differential privacy is the Laplace mechanism, and the noise is scaled according to the \textit{global sensitivity} of the query, defined as follows.
\begin{theorem}[Laplace Mechanism]\label{theo:41}
  The algorithm $\mathcal{A}(\mathbf{D}) = Q(\mathbf{D}) + Lap(\frac{GS_Q}{\epsilon})$ is $\epsilon$ - differential privacy.
\end{theorem}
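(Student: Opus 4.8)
The plan is to establish pure $\epsilon$-differential privacy ($\delta = 0$) by directly bounding the pointwise ratio of output densities on an arbitrary pair of neighboring instances. First I would recall that the additive noise $Lap(GS_Q/\epsilon)$ has density
\begin{equation*}
  f(x) = \frac{\epsilon}{2\,GS_Q}\exp\!\left(-\frac{\epsilon |x|}{GS_Q}\right),
\end{equation*}
so that for a fixed scalar output value $s$ the density of $\mathcal{A}(\mathbf{D})$ at $s$ is exactly $f(s - Q(\mathbf{D}))$. Since the output is a continuous random variable admitting a density, it suffices to control the ratio $f(s-Q(\mathbf{D}))/f(s-Q(\mathbf{D}'))$ at every point $s$; integrating such a uniform pointwise bound over any measurable set $\mathcal{S}$ then yields the required inequality on probabilities.

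The core computation is to rewrite the density ratio using the explicit exponential form,
\begin{equation*}
  \frac{f(s - Q(\mathbf{D}))}{f(s - Q(\mathbf{D}'))} = \exp\!\left(\frac{\epsilon}{GS_Q}\bigl(|s - Q(\mathbf{D}')| - |s - Q(\mathbf{D})|\bigr)\right),
\end{equation*}
and then to bound the exponent. Here lies the only inequality the argument needs: the reverse triangle inequality $|s - Q(\mathbf{D}')| - |s - Q(\mathbf{D})| \le |Q(\mathbf{D}) - Q(\mathbf{D}')|$, combined with the definition of global sensitivity $GS_Q = \max_{d(\mathbf{D},\mathbf{D}')=1} |Q(\mathbf{D}) - Q(\mathbf{D}')|$, which guarantees $|Q(\mathbf{D}) - Q(\mathbf{D}')| \le GS_Q$ for neighboring $\mathbf{D},\mathbf{D}'$. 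Substituting, the exponent is at most $\tfrac{\epsilon}{GS_Q}\cdot GS_Q = \epsilon$, so the density ratio is at most $e^{\epsilon}$ everywhere.

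Finally I would lift this pointwise density bound to the set-level statement: for any $\mathcal{S} \subseteq \mathrm{Range}(\mathcal{A})$,
\begin{equation*}
  \Pr[\mathcal{A}(\mathbf{D}) \in \mathcal{S}] = \int_{\mathcal{S}} f(s - Q(\mathbf{D}))\,ds \le e^{\epsilon}\int_{\mathcal{S}} f(s - Q(\mathbf{D}'))\,ds = e^{\epsilon}\Pr[\mathcal{A}(\mathbf{D}') \in \mathcal{S}],
\end{equation*}
which is precisely the $(\epsilon,0)$-DP condition. I do not anticipate a genuine obstacle here: the Laplace tails are exponential by design, so the ratio telescopes into a clean single exponential, and the argument is symmetric in $\mathbf{D}$ and $\mathbf{D}'$. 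The one point demanding care is conceptual rather than technical, namely that $GS_Q$ must be taken as the maximum over \emph{all} neighboring pairs rather than the particular pair at hand, so that the single noise calibration $b = GS_Q/\epsilon$ simultaneously certifies the bound uniformly over every pair; the reverse triangle inequality is moreover attainable, confirming that this calibration cannot be loosened in general.
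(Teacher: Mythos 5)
Your proof is correct and complete: it is the standard density-ratio argument for the Laplace mechanism — explicit Laplace density, reverse triangle inequality, the global-sensitivity bound on $|Q(\mathbf{D})-Q(\mathbf{D}')|$, and integration of the uniform pointwise bound over $\mathcal{S}$. There is, however, nothing in the paper to compare it against: Theorem~\ref{theo:41} is stated in the Preliminaries as a classical result quoted from the differential-privacy literature~\cite{dwork2006calibrating} and carries no in-paper proof. The only refinement worth adding is that the paper defines global sensitivity via the $L_1$ norm $\parallel Q(\mathbf{D})-Q(\mathbf{D}')\parallel$, so $Q$ may in principle be vector-valued; your scalar argument extends verbatim by adding independent $Lap(\frac{GS_Q}{\epsilon})$ noise to each coordinate, since the per-coordinate exponents then sum to at most $\frac{\epsilon}{GS_Q}\sum_i |Q_i(\mathbf{D})-Q_i(\mathbf{D}')| \le \epsilon$.
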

\begin{definition}[Global Sensitivity]
  Let $Q$ denote a particular query, then the global sensitivity of $Q$, denoted $GS_Q$, is
  \begin{equation}
    GS_Q =  \max \limits_{\mathbf{D},\mathbf{D}',d(\mathbf{D},\mathbf{D}')=1}\parallel Q(\mathbf{D})- Q(\mathbf{D}')\parallel.
  \end{equation} 
\end{definition}
The global sensitivity of the query is defined as the maximal $L_1$-norm distance between the exact answers of the query $Q$ on any neighboring databases $\mathbf{D}$ and $\mathbf{D}'$. However, unfortunately, the global sensitivity of many queries can be very high. What is worse, for the join operator the global sensitivity can be unbounded. Nissim \etal ~\cite{nissim2007smooth} proposed a local measure of sensitivity:
\begin{definition}[Local Sensitivity]
  For a query $Q$, the local sensitivity of $Q$ given the database instance $\mathbf{D}$, denoted as $LS_Q(\mathbf{D})$ is as follows:
  \begin{equation}
    LS_Q(\mathbf{D}) =  \max \limits_{\mathbf{D}',d(\mathbf{D},\mathbf{D}')=1}\parallel Q(\mathbf{D})- Q(\mathbf{D}')\parallel.
  \end{equation} 
  where the maximum is taken over all neighbors $\mathbf{D}'$ of the particular instance $\mathbf{D}$.
\end{definition}
 Note that, $GS_Q = \max \limits_{\mathbf{D}} LS_Q(\mathbf{D})$. The local sensitivity is much smaller than global sensitivity in most real-world scenarios. However, an algorithm that releases query results with noise scale proportional to $LS_Q(\mathbf{D})$ on instance $\mathbf{D}$ may not satisfy differential privacy, since $LS_Q(\mathbf{D})$ and $LS_Q(\mathbf{D}')$ can differ a lot on two neighboring instances $D$ and $D'$. Large differences in the amounts of noise added to $Q(\mathbf{D})$ and $Q(\mathbf{D}')$ may leak sensitive information. To address the issue, Nissim \etal ~\cite{nissim2007smooth} proposed the approach that selects noise magnitude according to a smooth upper bound on the local sensitivity instead of using the local sensitivity itself. But differently, compared with the local sensitivity, it is the maximum local sensitivity attained among neighboring instances, the tightest bound is called the \textit{smooth sensitivity}. The smooth sensitivity is based on the \textit{local sensitivity at distance} $t$, \ie $LS_Q^{(t)}(\mathbf{D})$, which is defined as
\begin{equation}
    LS_Q^{(t)}(\mathbf{D}) =  \max \limits_{{\mathbf{D}',d(\mathbf{D},\mathbf{D}') \leq t}}LS_Q(\mathbf{D}').
\end{equation} 

\begin{definition}[Smooth Sensitivity]
  The $\beta$ - smooth sensitivity of $Q$, denoted $SS_Q(\mathbf{D})$, is
  \begin{equation}
    SS_Q(\mathbf{D}) =  \max \limits_{t \geq 0}e^{-\beta t}LS_Q^{(t)}(\mathbf{D}).
  \end{equation} 
\end{definition}
$SS_Q(\mathbf{D})$ and $SS_Q(\mathbf{D}')$ differ by at most a constant factor on any two neighboring instances $\mathbf{D}$ and $\mathbf{D}'$ to ensure the ``smoothness'' of $SS_Q(\cdot)$, and the level of smoothness is parameterized by a value $\beta$ (a smaller value leads to a smooth bound) that depends on $\epsilon$.

\subsection{Problem Definition}

\noindent{\bfseries Differential Privacy in Star-join query}

Star-join queries are the most prevalent kind of queries in data warehousing, \textsc{olap} and business intelligence applications. Hence, answering star-join query under differential privacy can definitely benefit wide applications in privacy-preserving tasks in the \textsc{olap} scenarios. Therefore, in this work, we aim to propose the first \textsc{dp}-compliant star-join solution. However, before presenting the solution, due to the special characteristics of the query, it is necessary to reconsider the definition of differential privacy of star-join query. In this subsection, we introduce differential privacy in the star-join query, including neighboring database instances in different situations (fact table and dimension table), and differential privacy in single private relation and multi-private relations with star-join query afterwards.

Consider a database instance $\mathbf{D}_s$ over star schema $\mathbf{R}:\{ R_0, R_1, ..., R_n \}$, where $R_0$ is a fact table and the rest are $n$ dimension tables. Given a star-join query $Q$ shown in Definition~\ref{df:11}, let $N=|\mathbf{D}_s|$ be the input size, and denote the result of $Q$ on $\mathbf{D}_s$ as $Q(\mathbf{D}_s)$. We consider a \textsc{dp}-compliant star-join based on neighboring instances $\mathbf{D}_s$, $\mathbf{D}'_s$.
\begin{definition}[Differential Privacy in Star-Join Query] \label{df:36}
	A randomized mechanism $\mathcal{A}$ satisfies $\epsilon$ - differential privacy if for neighboring instances $\mathbf{D}_s$, $\mathbf{D}'_s$ over star-join, where $\epsilon > 0$, and any output range $\mathcal{S} \subseteq  Range(\mathcal{A})$, 
  \begin{equation}
    Pr[\mathcal{A}(\mathbf{D}_s) \in \mathcal{S}]\leqslant e{^\epsilon}\cdot Pr[\mathcal{A}(\mathbf{D}'_s) \in \mathcal{S}],
  \end{equation} 
  where the probability is taken over the randomness of $\mathcal{A}$.
\end{definition}
In the above definition, neighboring instances $\mathbf{D}_s$, $\mathbf{D}'_s$ over star schema should differ by one tuple according to the notion of the neighboring database. However, in the star schema, each dimension table is independent of each other and has a foreign key constraint referenced by the fact table. The tuples in the fact table and dimension tables exert different effects on the query result due to the asymmetric characteristics for both types of tables within the star-join procedure. Therefore, it is necessary to revisit the definition for neighboring instances $\mathbf{D}_s$, $\mathbf{D}'_s$. At the same time, database instances may contain a single private relation or multi-private relations in practical applications. Based on the above reasons, in this subsection, we consider the following situations of neighboring database instances $\mathbf{D}_s$, $\mathbf{D}'_s$.

\noindent\textbf{Scenario-dependent Neighboring Database Instance.} 

As we have discussed above, the unique characteristics of star-join rely on the fact that there exist a large number of foreign key constraints between the fact and dimension tables. As a result, the difference in a single tuple within a dimension table may result in a group of different tuples in the fact one. Hence, the asymmetry between both types of tables leads to different scenarios for neighboring instances. In the following, we shall discuss them accordingly.

\begin{definition}[$(a,b)$-private]
Given the star-join task shown in Definition~\ref{df:11}, which contains at least one sensitive table, we refer to the scenario as \textbf{$(a,b)$-private} if a number of $a$ ($a\in\{0,1\}$) fact tables and $b$ ($b\le n, a+b\ge 1$) dimension ones are sensitive.
\end{definition}

(1)\textit{$(0,k)$-private}. The private relations are all dimension tables, $R^1_p,...,R^k_p \in \{R_i\}(i\in [n], k \leq n)$. When the database instance $\mathbf{D}_s$ exists the foreign key constraint that table has foreign key referencing the primary key (PK) of the other table, the two instances $\mathbf{D}_s$ and $\mathbf{D}'_s$ are considered as neighbors if $\mathbf{D}'_s$ can be obtained from $\mathbf{D}_s$ by: deleting a tuple $t$ from the referenced table, and a set of tuples that reference $t$ in the referencing table. As each dimension table has a foreign key constraint with the fact table, we adopt the DP policy in star-join query, which defines neighboring instances by taking foreign key constraints into consideration. The basic private relation of $(0,k)$-private is to only include one dimension table, that is, when $k = 1$. Therefore, we refer to $\mathbf{D}_s$, $\mathbf{D}'_s$ as neighboring instances over star schema if all tuples in the difference between $\mathbf{D}_s$ and $\mathbf{D}'_s$ reference a single tuple $t_p$ in the private dimension table $R^k_p$. In particular, $t_p (t_p \in R^k_p)$ may also be deleted, in which case all tuples referencing $t_p$ in the fact table must be deleted in order to preserve the foreign key constraints. When $k > 1$, since each dimension table is independent of each other and the fact table has foreign keys referencing the primary key of each dimension one, thus we assign unique identifiers to the conjunction of all foreign keys in the fact table. If $\mathbf{D}'_s$ can be obtained from $\mathbf{D}_s$ by deleting a tuple $t^i_p \in R^i_p$ for each private relations, as well as all the tuples in the fact table referencing the same tuple $t \in t^1_p(PK) \wedge ... \wedge t^k_p(PK)$, we call $\mathbf{D}_s$, $\mathbf{D}'_s$ neighboring instances in this case.

(2)\textit{$(1,k)$-private}. The private relations contains the fact table. The simplest scenario of $(1,k)$-private is $k = 0$, which means that only the fact table is private. When $k=0$, two instances can only differ at one tuple in the fact table, $\mathbf{D_s}$, $\mathbf{D}'_s$ are referred to as neighboring instances, $d(\mathbf{D}_s, \mathbf{D}'_s) = 1$. Another scenario of $(1,k)$-private is the case when $k \neq 0$, \ie some of the dimension tables are private. In this case, two neighboring instances $\mathbf{D}_s$, $\mathbf{D}'_s$, can differ at one tuple in the fact table. Moreover, similar to $(0,k)$-private, $\mathbf{D}'_s$ also needs to be obtained from $\mathbf{D}_s$ by deleting a tuple $t^i_p \in R^i_p$ from each private dimension tables, as well as all the tuples in the fact table referencing the same tuple $t \in t^1_p(PK) \wedge ... \wedge t^k_p(PK)$.

The above outlines the different cases for neighboring instances $\mathbf{D}_s$, $\mathbf{D}'_s$ in the star-join query. In Definition~\ref{df:11}, star-join queries are transformed to predicate queries in the multidimensional data cube. Therefore, the algorithm that satisfies differential privacy is implemented for each predicate constraint of the star-join query $Q$, so that the query $Q$ conforms to differential privacy.

\section{Basic Mechanism for Star-join Query: Output Perturbation}\label{sec3}
In order to systematically find the ideal solution for answering star-join query under \textsc{dp}, we investigate ways through both the output and input perturbations. In this section, we propose the basic approach for \textsc{dp}-compliant star-join query by a pair of output-based perturbation mechanisms. Aside from that, we also conduct a theoretical utility study, which shows that the basic mechanism achieves a satisfactory (although not elegant) trade-off between utility, efficiency, and scalability.

Intuitively, following the standard \textsc{dp} solutions, we can propose a basic strategy by approximating real-valued functions based on adding a small amount of random noise to the true answer. In particular, we introduce both a data-independent approach and a data-dependent one to the star-join query result according to whether the global sensitivity of star-join query is bounded. In a data-independent approach, if the global sensitivity of star-join query is bounded, the server is in charge of adding random noise to the query result. The most popular method is to rely on the Laplace Mechanism (LM) that scales according to the global sensitivity $GS_Q$ of the star-join query $Q$. The variance of the Laplace Mechanism is $2(\frac{GS_Q}{\epsilon})^2$.

In star-join query, this method is only applicable with the $(1,0)$-private scenario, where the fact table is the only one that is sensitive. Besides that, the Laplace mechanism will fail to work in the $(\cdot,k)$-private relation contains dimension table due to the unbounded global sensitivity. Notably, in practical scenarios, sensitive information is mostly contained in the dimension tables rather than the fact one (\eg $Customer$ is a private relation that needs to be protected). 

For those cases when private relation includes dimension table, that is, the global sensitivity is unbounded, we first consider to adopting a data-independent approach by utilizing the Truncation Mechanism (TM) that bounds the global sensitivity by simply deleting all records, the sensitivity of which is larger than a predefined threshold $\tau$, before adding random noise to the true answer. However, a well-known limitation of the truncation mechanism is the bias-variance trade-off: a large threshold $\tau$ will lead to large random noise with tremendous variance; while a small $\tau$ may introduces a bias as large as the query result itself. When the private relation contains dimension table, due to the aforementioned limitation of LM and TM in the data-independent approach, we select to adopt a data-dependent approach by injecting data-dependent noise into the query result. 

The data-dependent approach involves applying Local Sensitivity (LS) and Race-to-the-Top (R2T) to the star-join query. The LS is usually a two-phase strategy as follows.
\begin{enumerate}
    \item compute the upper bound of local sensitivity $\hat{LS}_Q(\mathbf{D}_s)$ in star-join query $Q$ with database instance $\mathbf{D}_s$;
    \item add the noise that calibrates the size of $\hat{LS}_Q(\mathbf{D}_s)$ to the query result. 
\end{enumerate}
In general, there are two mechanisms for implementing LS, namely Cauchy Mechanism and Laplace Mechanism. Cauchy Mechanism works by setting $\beta = \frac{\epsilon}{2(\gamma + 1)}$, and then adds noise $Cauchy(\frac{\hat{LS}_Q(\mathbf{D}_s)}{\beta})$ to the query answer $Q(\mathbf{D}_s)$. It preserves $\epsilon$ - differential privacy, where $Cauchy(\cdot)$ is drawn from the general Cauchy distribution. For instance, suppose we set $\gamma = 4$ for which $Var(Cauchy(\cdot)) =1$, and the noise level of Cauchy Mechanism is thus $(\frac{10\hat{LS}_Q(\mathbf{D}_s)}{\epsilon})^2$. Notably, as there is a long tail in the general Cauchy distribution, which decays only polynomially compared with the exponential decay within the Laplace distribution, one can use the Laplace distribution to achieve a better concentration. However, the Laplace Mechanism only yields $(\epsilon, \delta)$ - differential privacy. The Laplace Mechanism works by setting $\beta = \frac{\epsilon}{2ln(\frac{2}{\delta})}$, and adds noise $Lap(\frac{2\hat{LS}_Q(\mathbf{D}_s)}{\epsilon})$ to the true answer $Q(\mathbf{D}_s)$. Since $Var(Lap(\cdot)) = 2$, the noise level of Laplace Mechanism is $8(\frac{\hat{LS}_Q(\mathbf{D}_s)}{\epsilon})^2$.

Another method in the data-dependent approach is Race-to-the-Top (R2T). It is a truncation mechanism with foreign key constraints in join query, and can be used in combination with any truncation method. The basic idea of R2T is to try geometrically increasing values of truncation threshold $\tau$ and somehow pick the “winner” from all the trials. The R2T first computes the query result $Q(\mathbf{D}_s, \tau)$ with various threshold $\tau$, and then adds $Lap(\frac{\tau}{\epsilon})$ to $Q(\mathbf{D}_s, \tau)$ to get the noise result $\hat{Q}(\mathbf{D}_s, \tau)$, which would turn it into an $\epsilon$ - differential privacy mechanism. Finally, returning the maximum $\hat{Q}(\mathbf{D}_s, \tau)$ preserves \textsc{dp} by the post-processing property of differential privacy. The R2T works as follows: 

For $\tau^{(j)}, j = 1,...,log(GS_Q)$, 
\begin{equation}
\begin{split}
    \hat{Q}(\mathbf{D}_s, \tau^{(j)}) = Q(\mathbf{D}_s, \tau^{(j)}) + Lap(log(GS_Q)\frac{\tau^{(j)}}{\epsilon}) \\ - log(GS_Q)ln(\frac{log(GS_Q)}{\alpha}) \cdot \frac{\tau^{(j)}}{\epsilon},
\end{split}
\end{equation} 
and then outputs $max\{max_j\hat{Q}(\mathbf{D}_s, \tau^{(j)}), Q(\mathbf{D}_s, 0)\}$, where $\alpha$ is the probability concern about the utility. The R2T mechanism satisfies $\epsilon$ - differential privacy by the basic composition theorem~\cite{dwork2014algorithmic}. Note that, $Q(\mathbf{D}_s, \tau)$ is different in queries with and without self-join, it may rely on Linear Program(LP)-based truncation mechanism when there exists self-join in the query. For the utility of R2T, we have $Q(\mathbf{D}_s) - 4log(GS_Q)ln(\frac{log(GS_Q)}{\alpha})\frac{\tau^{*}(\mathbf{D}_s)}{\epsilon} \leq \hat{Q}(\mathbf{D}_s)$ with probability at least $1 - \alpha$. Hereby $\tau^{*}(\mathbf{D}_s)$ means a bound of threshold that holds for any $\tau \geq \tau^{*}(\mathbf{D}_s)$, $Q(\mathbf{D}_s, \tau) = Q(\mathbf{D}_s)$.

\begin{remark}
In the star-join query, the sensitivity of the query plays an important role in the output mechanism. From the aspect of the output perturbation, the utility is directly affected by the noise that is scaled according to the sensitivity of the star-join query. Both the global and the local sensitivity are extremely high, due to the existence of join operations in star-join query. Especially for an $n$-way star join, the global sensitivity can be as high as $O(N^{n-1})$, which is unbounded as $N=|\mathbf{D}_s|$ is the input size. Therefore, this brings down the utility because the $GS_Q$ of the star-join query can be $\infty$ under pure \textsc{dp}. Although using the instance-depended noise, the output mechanism has the intrinsic limitation on achieving high utility due to the fact that the smooth upper bound of $LS_Q(\mathbf{D}_s)$ is very large in practical applications, and the computational cost of it is extremely high. In short, the high sensitivity of star-join query results extremely limits the utility level that the basic output perturbation mechanism can achieve.    
\end{remark}
\begin{remark}
Although the output mechanism adopts the smooth sensitivity to reduce the noise for better utility, in fact, it is shown that for certain problems, computing or even approximating the smooth sensitivity is NP-hard~\cite{tao2020computing}. Therefore, the computational hardness of the smooth sensitivity of star-join queries increases with the increase of multi-way joins. \cite{dong2021residual} argues that it may not be NP-hard, and even if there is a polynomial-time algorithm to compute the smooth sensitivity, it will be inevitably too complicated in practice. Thus it is challenging for the output perturbation mechanism to achieve satisfactory scalability and is impractical in realistic scenarios. 
\end{remark}

\section{Advanced Approach: DP-starJ}

Motivated by the limitation in achieving an elegant tradeoff between utility, efficiency, and scalability under the output mechanism, we propose an advanced approach of DP-starJ, which can achieve strict \textsc{dp} with higher utility and efficiency to answer the star-join query. The main idea of DP-starJ is to decompose high sensitivity using the intrinsic characteristics of star-join to balance the utility and efficiency. Compared with the output perturbation mechanism, DP-starJ avoids the high sensitivity of star-join queries while improving the utility and reducing the computation cost. 

The overall intuition of DP-starJ is to add noise to star-join queries from the view of input, which turns out to be a challenging task. In the following, we first present a framework of DP-starJ to answer star-join query under \textsc{dp}
and then propose a mechanism of input perturbation in DP-starJ, namely, Predicate Mechanism (PM). Afterwards, we introduce DP-starJ to support for various types of star-join queries. At last, we give the granularity of privacy and utility study.

\subsection{DP-starJ} 

As discussed in Section~\ref{sec3}, none of the existing mechanisms can overcome all three key challenges (utility, efficiency, and scalability) in \textsc{dp}-compliant star-join query. To address this problem, we first propose a framework called DP-starJ that answers the star-join query under \textsc{dp}. Its main idea is to add random noise to star-join query procedure rather than the query result. DP-starJ decomposes the predicates of star-join query to reduce the high global sensitivity of the query. Specifically, DP-starJ mainly consists of three phases as depicted in Figure~\ref{fig_2}: 

\begin{figure*}[t]
  \centering
  \includegraphics[width=\linewidth]{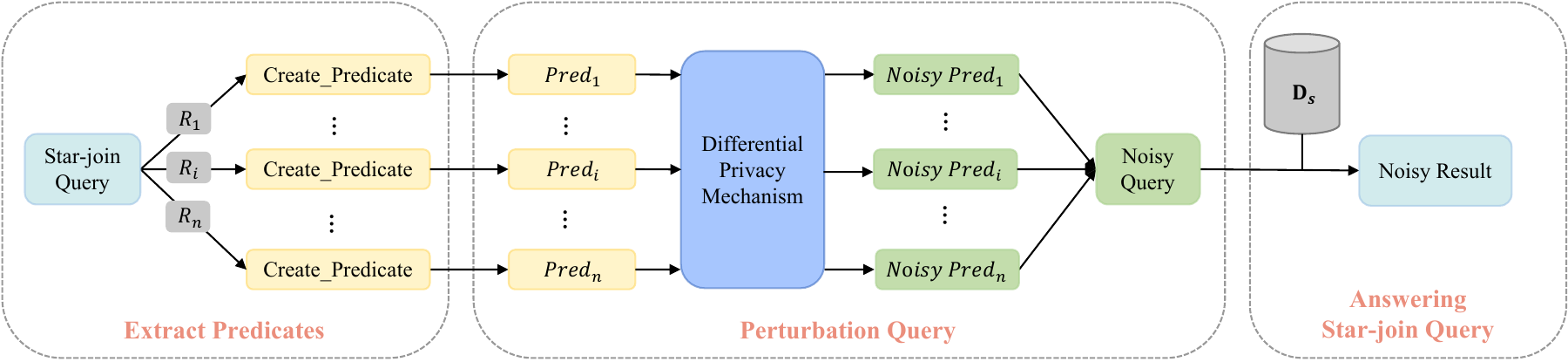}\vspace{-0.3cm}
  \caption{Execution Phases with DP-starJ}\label{fig_2}
\end{figure*}

\emph{Phase 1. Extract Predicates.} In DP-starJ, given the star-join query $Q$ with $n$ dimension tables and a fact table $R_0$, the server first extracts predicates from the query. The star-join query can be expressed as a predicate query because of the star structure of the database instance and the independence of dimension tables. Therefore, in this phase, the server mainly extracts predicates of each dimension table according to the star-join query and database schema. In star-join query, the type of predicate typically includes the range constraint and point constraint of the attributes in the dimension table. Hence, this phase extracts predicates based on the dimension table involved in the given query. If the star-join query includes all dimension tables, the server will create one predicate for each of the $n$ predicates for each of $n$ dimension tables, resulting in a total of $n$ predicates.

\emph{Phase 2. Perturbation Query.} In this phase, we employ some perturbation mechanisms to the star-join query to ensure differential privacy of the DP-starJ framework. The main process involves adding random noise into the predicates generated in Phase 1. Then, it aggregates all noise predicates together into noised star-join query, where the perturbation mechanism adopted is orthogonal and various specific methods can be employed, \eg the Laplace Mechanism for each attribute adopted in this paper. 


\emph{Phase 3. Answering Star-join Query.} In this phase, the server answers the star-join query $Q$ in a \textsc{dp} manner by accessing the database instance $\mathbf{D}_s$ with the noisy star-join query $\hat{Q}$. 

To balance the utility, efficiency, and scalability, DP-starJ responds to the star-join query in the form of an input perturbation. In addition, it decomposes predicates to reduce the sensitivity of the query in order to improve the utility. In the following, we present Predicate Mechanism to implement the DP-starJ framework, which helps us identify the key problems for developing DP-starJ.

\subsection{Predicate Mechanism}
Let $\mathbf{D}_s$ be a database instance over star schema and a star-join query $Q$ aggregates over the join result $J(\mathbf{D}_s)$. Since  $\Phi$ is an indicate function, we simplify  Equation~\ref{eq1} to the following form:
\begin{equation}
   Q(\mathbf{D}_s) = \sum_{t \in J(\mathbf{D}_s)} \Phi(t) \cdot \mathbf{w}(t) = \mathbf{\Phi} \cdot \mathbf{w}(t).
\end{equation}
Hereby $\mathbf{\Phi}$ refers to a predicate matrix of star-join query with all records. Since in star-join queries each dimension table $R_i, i \in [n]$ is independent of each other and places filter predicates towards the attributes locally, thus $\mathbf{\Phi}$ can reflect the conjunctions of the predicate, $\Phi := \phi_{a_1} \wedge \phi_{a_2} \wedge ...\wedge \phi_{a_n}$. Moreover, we can vectorize the weight function of tuple $\mathbf{w}(t)$ as $\mathbf{W}$, so the above equation can be transformed into the following form:
\begin{equation}
   Q(\mathbf{D}_s) =  \Phi \cdot \mathbf{w}(t) = \Phi \cdot \mathbf{W}= (\phi_{a_1} \wedge \phi_{a_2} \wedge ...\wedge \phi_{a_n}) \cdot \mathbf{W},
\end{equation}
where $\phi_{a_i}$ is the predicate condition of dimension table $R_i$ in the star-join query $Q$. 

\begin{algorithm}[t]
\caption{Predicate Mechanism}
\label{alg:pm}
\KwIn{Star-join query $Q$, Data instance $\mathbf{D}_s$, Data Matrix $\mathbf{W}$,  parameter $\epsilon$}
\KwOut{Noisy result: $\hat{Q}(\mathbf{D}_s)$}
$ \Phi \leftarrow  Q $ \;
$\epsilon_i = \frac{\epsilon}{n}$ \;
\For { each predicate $\phi_{a_i} \in \Phi$}{
$\hat{\phi}_{a_i} \leftarrow \phi_{a_i} + Lap(\frac{dom(a_i)}{\epsilon_i})$} 
$\hat{\Phi} \leftarrow \hat{\phi}_{a_1} \wedge...\wedge \hat{\phi}_{a_n}$ \;
$\hat{Q}(\mathbf{D}_s) = \hat{\Phi} \cdot \mathbf{W}$ \;
    $\mathsf{Return} \ \hat{Q}(\mathbf{D}_s)$
\end{algorithm}

Unlike the output perturbation, Predicate Mechanism adds random perturbations to the predicates involved in the star-join query procedure before touching the raw database instance. 
\begin{equation}
\begin{split}
   \hat{Q}(\mathbf{D}_s) &= (\Phi+Lap(\frac{GS_{\Phi}}{\epsilon})) \cdot \mathbf{W} \\
      &= ((\phi_{a_1}+ Lap(\frac{GS_{\phi_{a_1}}}{\epsilon_1})) \wedge...\wedge (\phi_{a_n}+ Lap(\frac{GS_{\phi_{a_n}}}{\epsilon_n}))) \cdot \mathbf{W} \\
      &= (\hat{\phi}_{a_1} \wedge...\wedge \hat{\phi}_{a_n}) \cdot \mathbf{W} \\
      &=  \hat{\Phi} \cdot \mathbf{W},
\end{split}
\end{equation}
where the privacy cost is $\epsilon_i = \frac{\epsilon}{n}$, and the global sensitivity $GS_{\phi_{a_i}}$ of each predicate $\phi_{a_i}$ is the domain size of attribute $a_i$ in dimension table $R_i$. Algorithm~\ref{alg:pm} shows the pseudo-code of PM, the server (i) generates the predicate $\Phi$ in a star-join query $Q$, (ii) decomposes $\Phi$ into dimension table predicates $\phi_{a_i}$ based on $Q$ and adds noise to the predicates $\phi_{a_i}$, and (iii) answers star-join query $Q$ according to the noised predicate $\hat{\Phi}$ and finally obtains the \textsf{dp} result $\hat{Q}(\mathbf{D}_s)$. The main idea of PM is to add random noise to each predicate $\phi_{a_i}$ of dimension tables in star-join query $Q$ because the predicates of each dimension table are independent of each other. We now carry on with the predicate perturbation of each single-dimension table for the predicate mechanism implementation.

In the predicate perturbation, a straightforward solution is to perturb each predicate separately using a single Laplace perturbation algorithm, such that every attribute is given a privacy budget $\epsilon_i = \epsilon/n$. Then, it is well known that the Laplace perturbation is suitable for real value, but the predicate of the query may contain point constraints and range constraints of an attribute. For two classes of predicates, we use Laplace noise to perturb predicates with point constraints and range constraints, respectively. The specific process is as follows.

\textbf{Predicate Perturbation for Each Single Attribute.} In an attribute $a_i$, the predicate $\phi_{a_i}$ in dimension table $R_i$ may contain either range constraints $a_i \in [l, r]$, or point constraints $a_i = v$. If the predicate is a point constraint, the predicate perturbation is directly adding the Laplace noise to the value $v$. When the predicate is a range constraints, $a_i \in [l, r]$, the predicate perturbation is to perturb both ends of the interval $[l, r]$ independently using a Laplace perturbation algorithm, such that every attribute is given a privacy budget $\epsilon/2$. The specific process is shown in Algorithm~\ref{alg:na}.   

\begin{algorithm}[t]
\caption{Predicate Mechanism for An Attribute ($\mathbf{PM_A}$)}
\label{alg:na}
\KwIn{Predicate $\phi_{a_i}$ of an attribute $a_i$, parameter $\epsilon$}
\KwOut{Noisy Predicate: $\hat{\phi}_{a_i}$}
\eIf{$\phi_{a_i}$ is $a_i = v$}
{
$\hat{v} = v + Lap(dom(a_i)/\epsilon)$ \;
$\hat{\phi}_{a_i} \leftarrow a_i = \hat{v}$
}
{
$\phi_{a_i} \leftarrow a_i \in [l, r]$ \;
$\hat{l} = l + Lap(\frac{2 \cdot dom(a_i)}{\epsilon})$ \;
$\hat{r} = r + Lap(\frac{2 \cdot dom(a_i)}{\epsilon})$ \;
\While{$\hat{l} < \hat{r}$}{
$\hat{\phi}_{a_i} \leftarrow a_i \in [\hat{l}, \hat{r}]$}
}
    $\mathsf{Return} \ \hat{\phi}_{a_i}$
\end{algorithm}

\subsection{DP-starJ Applications}
 To further boost the robust performance, in this section, we discuss specific solutions forthe  predicate mechanism in DP-starJ for various types of star-join tasks. The main idea of DP-starJ is to inject random data-independent noise into star-join query, which is an application of the predicate mechanism on different star-join queries. Therefore, we present the predicate mechanism for aggregated star-join queries, ``\textsf{Group\_By}'' operation, and star-join workload queries as follow.

\textbf{Predicate Mechanism for Aggregated Star-join Queries.} We now consider the case for the star-join aggregation query that aggregates the number of tuples that suit the filter conditions. In this case, the solution is to perturb each predicate independently using a single predicate perturbation algorithm (Algorithm~\ref{alg:na}), such that every attribute is given a privacy budget $\epsilon_i = \epsilon/n$. The specific process is shown in Algorithm~\ref{alg:cq}, where the data matrix is $\mathbf{1}$ in which the value of all tuples is 1. If the aggregation function is the \textsf{SUM} in the star-join, the element of the data matrix is the value of the attribute, which is the summation over the attributes in the star-join query. In addition, if the star-join query involves ``\textsf{Group\_By}'' operation, similar to \textsf{COUNT} queries and \textsf{SUM} queries, we shall only perturb the predicates of the query before ``\textsf{Group\_By}'' operation. Therefore, DP-StarJ supports not only ordinary aggregate queries but also ``\textsf{Group\_By}'' statement in star-joins.

\begin{algorithm}[t]
\caption{Predicate Mechanism for Star-join Counting Query}
\label{alg:cq}
\KwIn{Star-join counting query $Q_c$, Data instance $\mathbf{D}_s$, Data Matrix $\mathbf{W}$,  parameter $\epsilon$}
\KwOut{Noisy result: $\hat{Q}_c(\mathbf{D}_s)$}
$ \Phi \leftarrow  Q_c $ \;
$\epsilon_i = \frac{\epsilon}{n}$ \;
\For { each predicate $\phi_{a_i} \in \Phi$}
{
     $\hat{\phi}_{a_i} \leftarrow \mathbf{PM_{A}}(\phi_{a_i}, \epsilon_i)$
} 
$\hat{\Phi} \leftarrow \hat{\phi}_{a_1} \wedge...\wedge \hat{\phi}_{a_n}$ \;
$\hat{Q}_c(\mathbf{D}_s) = \hat{\Phi} \cdot \mathbf{W}$ \;
    $\mathsf{Return} \ \hat{Q}_c(\mathbf{D}_s)$
\end{algorithm}

\textbf{Predicate Mechanism for Star-join Workload Queries.} In addition, as workload tasks are ubiquitous in \textsc{olap} scenarios~\cite{rohm2000olap}, we extensively consider answering star-join workload queries under differential privacy by using PM. Given a workload of $l$ star-join queries $\mathbf{L}$, $\mathbf{L} = \{Q_1, Q_2,..., Q_l \}$. One straightforward solution is to process each query $Q_i$ independently by using the Predicate Mechanism. Unfortunately, this strategy fails to exploit the correlations between different queries, which has been exhaustively studied and justified to be valuable in designing a more effective \textsc{dp} solution~\cite{yuan2015optimizing,li2015matrix}. Consider a workload of three different queries, $Q_1$ is interested in the total number of products sold in the first half of this year, while $Q_2$ is interested in the total number of products sold in the second half of this year, and $Q_3$ asks for the total number throughout the whole year. Clearly, the three queries are correlated with each other as $Q_3 = Q_1 + Q_2$. Given that fact, an alternative strategy for answering these queries is to process only $Q_1$ and $Q_2$, and use their sum to answer $Q_3$. Inspired by this phenomenon, we propose a Workload Decomposition (WD) strategy to answer star-join workload queries under differential privacy in the following.

Consider the star-join workload queries $\mathbf{L} = \{Q_1, Q_2,..., Q_l \}$. According to our discussion in Section 3, each star-join query $Q_i$ can be represented by its predicate $\Phi_i$. Following that way, the star-join workload queries $\mathbf{L}$ can be accordingly represented as a set of predicates $\Phi_i$, $\mathbf{L} := \{\Phi_1, \Phi_2,..., \Phi_l \}$. Each predicate $\Phi_1$ refers to filter conditions for different dimension tables, $\Phi_i := \phi_{a_1}^{i} \wedge \phi_{a_2}^{i} \wedge ...\wedge \phi_{a_n}^{i}$. 

Firstly, we adopt one-hot-encoding to quantify $\Phi_i$ into a series of vectors. As shown in Example~\ref{ex1}, the predicate of the star-join query is $\Phi = \phi_{Date} \wedge \phi_{Cust} \wedge \phi_{Supp}$, assume that the domain of $region$ is $\{ \rm A, B, C\}$ and ${\rm REGION = C}$, we can vectorized $\Phi$ as $[111111000000001001]$. Similarly, the vector representation for $\phi_{Date}$ and $\phi_{Cust}$($\phi_{Supp}$) are $[111111000000]$ and $[001]$($[001]$), respectively. Therefore, the workload queries $\mathbf{L}$, \ie a collection of $l$ star-join queries, can be arranged by rows and forms an $l \times m_d$ matrix and $m_d = \prod^n_{i=1} m_i$. The predicate matrix $\mathbf{P}^{\mathbf{L}}_i$ of each dimension table $R_i$ on workload queries $\mathbf{L}$ is an $l \times m_i$ matrix, hereby $m_i$ is the domain size of attribute on dimension table $R_i$. 

Secondly, for each predicate matrix $\mathbf{P}^{\mathbf{L}}_i$, we shall perform a matrix decomposition as follow:
\begin{definition}[Matrix Decomposition]
Given a predicate matrix $\mathbf{M}$ and a strategy matrix $\mathbf{A}$, we say $\mathbf{M}$ decomposes into $\mathbf{XA}$ if each predicate in $\mathbf{M}$ can be expressed as a linear combination of predicates in $\mathbf{A}$. In other words, there exists a solution matrix $\mathbf{X}$ to $\mathbf{M} = \mathbf{XA}$.
\end{definition}
For each predicate matrix on the workload queries $\mathbf{L}$, $\mathbf{P}^{\mathbf{L}}_i$, MD shall finds a new strategy matrix $\mathbf{A}_i$ to support $\mathbf{P}^{\mathbf{L}}_i$, and then evaluates the strategy matrix $\mathbf{A}_i$ using the Predicate Mechanism to obtain a noisy strategy matrix $\mathbf{\hat{A}}_i$. Afterwards, we can reconstruct a noisy predicate matrix from the noisy strategy matrix $\mathbf{\hat{A}}_i$, $\mathbf{\hat{P}}^{\mathbf{L}}_i = \mathbf{A}_i^{+} \mathbf{\hat{A}}_i$. 

Finally, we connect the noisy predicate matrix $\mathbf{\hat{P}}^{\mathbf{L}}_i$ to each corresponding dimension table into the noisy predicate matrix $\mathbf{\hat{P}}$. The server answers the star-join workload queries $\mathbf{\hat{L}}$ in a DP manner by accessing the database instance with the noisy predicate matrix $\mathbf{\hat{P}}$ on the noisy star-join workload queries $\mathbf{\hat{L}}$.

Algorithm~\ref{alg:wq} outlines the above procedure. It first uses one-hot-encoding to represent the predicate of star-join the workload queries $\mathbf{L}$ and assigns the privacy budget to the predicate matrix that decomposes the predicate according to the dimension table (Lines 1-2). After that, matrix decomposition is performed on each predicate matrix $\mathbf{P}^{\mathbf{L}}_i$ to get the corresponding strategy matrix $\mathbf{A}_i$ and applies Predicate Mechanism on the strategy matrix to obtain a noised strategy matrix $\hat{\mathbf{A}_i}$, and reconstruct the noise-injected predicate matrix $\hat{\mathbf{P}}^{\mathbf{L}}_i$ through the noise-injected strategy matrix (Lines 3-9). Afterwards, it connects each noise-injected predicate matrix $\mathbf{\hat{P}}^{\mathbf{L}}_i$ into $\mathbf{\hat{P}}$ of the noisy star-join workload queries $\mathbf{\hat{L}}$ and obtain the noisy result $\hat{Q}_L(\mathbf{D}_s)$ of the workload by accessing the database instance $\mathbf{D}_s$ (Lines 10-12), and finally outputs the noisy result $\hat{Q}_L(\mathbf{D}_s)$.

\textbf{Predicate Mechanism for snowflake queries.} Besides star-join, the proposed PM can also be applied to the snowflake model (\resp snowflake query), which further hierarchizing the dimension tables of the star schema, resulting in a more normalized structure. For example, in Figure~\ref{fig_0} and Example~\ref{ex1}, $Date$ can be decomposed into dimension tables such as Year, Quarter, Month, and Day, reducing redundancy. Therefore, the star-join query in Example~\ref{ex1} can be extended to snowflake query by changing ${Date}.{month} < 7$ to ${Date}.MK = {Month}.MK$ $AND$ ${Month}.month < 7$. At this point, we can directly apply PM to perturb the predicate to obtain the \textsc{dp} snowflake query. This does not affect the functionality of DP-starJ while extending star-join queries to queries on the snowflake model.
\begin{algorithm}[t]
\caption{Predicate Mechanism for Star-join Workload Queries}
\label{alg:wq}
\KwIn{Star-join Workload Queries $\mathbf{L} = \{Q_1, Q_2,..., Q_l \}$, Data instance $\mathbf{D}_s$, Data Matrix $\mathbf{W}$,  parameter $\epsilon$}
\KwOut{Noisy result: $\hat{Q}_L(\mathbf{D}_s)$}
$ \mathbf{P} \leftarrow \mathbf{One-Hot-Encoding}(\mathbf{L}) $ \;
$\epsilon_i = \frac{\epsilon}{n}$ \;
\For { each predicate matrix $\mathbf{P}^{\mathbf{L}}_i \in \mathbf{P}$}
{
     $\mathbf{A}_i \leftarrow\mathbf{MatrixDecom}(\mathbf{P}^{\mathbf{L}}_i)$ \;
     $\phi_{a_i} \leftarrow \mathbf{A}_i$ \;
     $\hat{\phi}_{a_i} \leftarrow \mathbf{PM_A}(\phi_{a_i}, \epsilon_i) $ \;
     $\hat{\mathbf{A}}_i \leftarrow \hat{\phi}_{a_i}$ \;
     $\hat{\mathbf{P}}^{\mathbf{L}}_i = \mathbf{A}^{+}_i\hat{\mathbf{A}_i}$ \;
} 
$\hat{\mathbf{P}} \leftarrow {\hat{\mathbf{P}}^{\mathbf{L}}_1,...,\hat{\mathbf{P}}^{\mathbf{L}}_n}$ \;
$\hat{\mathbf{L}} \leftarrow \hat{\mathbf{P}}$ \;
$\hat{Q}_L(\mathbf{D}_s) = \hat{\mathbf{L}} \cdot \mathbf{W}$ \;
    $\mathsf{Return} \ \hat{Q}_L(\mathbf{D}_s)$
\end{algorithm} 

\subsection{Theoretical Study over the Privacy and Utility}\label{ssec54}
We now conduct a theoretical study on the privacy guarantee as well as the utility of the proposed Predicate Mechanism. In this section, we first study the privacy guarantee of the Predicate Mechanism and DP-starJ in terms of Definition~\ref{df:36}. After that, we theoretically study the utility of the Predicate Mechanism.

\vspace{-2ex}\begin{theorem}\label{theo:51}
  Algorithm~\ref{alg:na} satisfies $\epsilon$-differential privacy.
\end{theorem}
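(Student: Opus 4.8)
The plan is to reduce the claim to two applications of the Laplace Mechanism (Theorem~\ref{theo:41}) combined with sequential composition, handling the two branches of Algorithm~\ref{alg:na} separately. The first thing I would pin down is the global sensitivity of a single-attribute predicate. Since $\phi_{a_i}$ ranges over $dom(a_i)$ and the perturbed quantity (the point value $v$, or each endpoint $l,r$) can be displaced across the entire domain, I would argue that $GS_{\phi_{a_i}} = dom(a_i)$, matching the noise scales hard-coded in the algorithm. Crucially, this bound is \emph{data-independent}: it depends only on the domain size $m_i$ and not on $\mathbf{D}_s$, which is exactly the property that lets the Predicate Mechanism sidestep the unbounded sensitivity discussed in Section~\ref{sec3}.

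For the point-constraint branch, where $\phi_{a_i}$ has the form $a_i=v$, the argument is immediate. The algorithm releases $\hat v = v + Lap(dom(a_i)/\epsilon)$, which is precisely the Laplace Mechanism instantiated with $GS=dom(a_i)$ and the full budget $\epsilon$. By Theorem~\ref{theo:41} this step is $\epsilon$-differentially private, and the subsequent deterministic assignment $\hat\phi_{a_i} \leftarrow (a_i=\hat v)$ is a post-processing of $\hat v$, so it preserves the guarantee.

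For the range-constraint branch, where $\phi_{a_i}$ has the form $a_i \in [l,r]$, I would treat the two endpoints as two independent sub-mechanisms. Each of $\hat l = l + Lap(2\,dom(a_i)/\epsilon)$ and $\hat r = r + Lap(2\,dom(a_i)/\epsilon)$ is a Laplace Mechanism with sensitivity $dom(a_i)$ and budget $\epsilon/2$, hence each is $(\epsilon/2)$-differentially private by Theorem~\ref{theo:41}. Invoking sequential composition~\cite{dwork2014algorithmic}, their combination satisfies $(\epsilon/2+\epsilon/2)$-differential privacy, i.e. $\epsilon$-DP; the validity check in the while loop and the assembly of the interval $[\hat l,\hat r]$ are again post-processing and consume no budget. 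Taking the two branches together yields the stated $\epsilon$-DP guarantee for Algorithm~\ref{alg:na}.

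The step I expect to require the most care is the sensitivity claim $GS_{\phi_{a_i}} = dom(a_i)$. Unlike an ordinary numeric query, the object being perturbed here is a \emph{predicate} rather than an aggregate over $\mathbf{D}_s$, so I would need to make explicit in what sense neighboring instances induce a bounded change in the (one-hot encoded) predicate, and verify that the worst-case displacement of the selected domain value is exactly the domain size. Once that bound is justified, the remainder reduces to mechanical invocations of the Laplace Mechanism, composition, and the post-processing property of differential privacy.
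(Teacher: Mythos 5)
Your proposal is correct and follows essentially the same route as the paper: both rest on the claim that the global sensitivity of a single-attribute predicate equals the domain size $dom(a_i)$, and then invoke the Laplace Mechanism (Theorem~\ref{theo:41}). Your version is in fact more explicit than the paper's own proof, which treats the algorithm as one monolithic Laplace mechanism and never spells out the branch split, the $\epsilon/2$ sequential-composition accounting for the two range endpoints, or the post-processing steps; the conceptual worry you flag --- in what sense a query-specified predicate has a well-defined sensitivity over neighboring instances --- is glossed over in exactly the same way by the paper ("the number of ways a change in a record can affect the predicate is equal to the size of the domain"), so your attempt is no weaker on that point.
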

\vspace{-2ex}\begin{proof}
    Algorithm~\ref{alg:na} in the paper adds Laplace noise to the predicate, and the scale of Laplace is the ratio of the domain and privacy cost. In the worst case, the number of ways a change in a record can affect the predicate is equal to the size of the domain. Therefore, its global sensitivity is the size of the domain. In other words, Algorithm~\ref{alg:na} essentially implements the Laplace mechanism on the predicate. According to the Theorem~\ref{theo:41}, Algorithm~\ref{alg:na} satisfies $\epsilon$-differential privacy.
\end{proof}

\vspace{-2ex}\begin{theorem}
  Predicate Mechanism and DP-starJ satisfy $\epsilon$-differential privacy.
\end{theorem}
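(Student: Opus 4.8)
The plan is to establish the privacy guarantee by composing the single-attribute result of Theorem~\ref{theo:51} across the $n$ dimension predicates, and then to invoke the post-processing immunity of differential privacy to conclude that DP-starJ as a whole inherits the guarantee. The key observation is that the Predicate Mechanism (Algorithm~\ref{alg:pm}) does not touch the raw data instance $\mathbf{D}_s$ when injecting noise; it perturbs each dimension predicate $\phi_{a_i}$ \emph{before} any access to the sensitive tables, and each perturbation is an independent application of the per-attribute mechanism $\mathbf{PM_A}$ with privacy budget $\epsilon_i = \epsilon/n$.

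First I would recall from Theorem~\ref{theo:51} that each call to Algorithm~\ref{alg:na} on predicate $\phi_{a_i}$ with budget $\epsilon_i$ satisfies $\epsilon_i$-differential privacy, because the global sensitivity of a single predicate is bounded by $dom(a_i)$ and the Laplace scale is set to $dom(a_i)/\epsilon_i$ (in the range-constraint case, the two endpoint perturbations each use $\epsilon_i/2$ and compose to $\epsilon_i$). Next I would argue that the full predicate $\hat{\Phi} = \hat{\phi}_{a_1} \wedge \cdots \wedge \hat{\phi}_{a_n}$ is produced by $n$ mechanisms operating on disjoint sensitive components, so by the basic (sequential) composition theorem~\cite{dwork2014algorithmic} the combined noisy predicate satisfies $\sum_{i=1}^{n} \epsilon_i = n \cdot (\epsilon/n) = \epsilon$-differential privacy. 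Here I must be careful about the correct notion of neighboring instances: under the $(a,b)$-private definitions, a single-tuple change in a private dimension table $R_i$ affects only the predicate $\phi_{a_i}$ over that table (and cascades to the fact table only through the join keys, which the predicate noise does not depend on), so the per-attribute sensitivity bound is indeed the right object to compose.

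The key step is then to recognize that the final query answer $\hat{Q}(\mathbf{D}_s) = \hat{\Phi} \cdot \mathbf{W}$ is a deterministic, data-dependent function applied to the privatized predicate $\hat{\Phi}$, and that the weight matrix $\mathbf{W}$ depends only on the fact-table measure and not on the sensitive dimension attributes being protected. Consequently, evaluating $\hat{\Phi}$ against $\mathbf{D}_s$ is pure post-processing of the $\epsilon$-DP output $\hat{\Phi}$, and by the post-processing property of differential privacy the composed release $\hat{Q}(\mathbf{D}_s)$ remains $\epsilon$-differentially private. I would note that the same reasoning covers the task-specific variants: Algorithm~\ref{alg:cq} is exactly the Predicate Mechanism with $\mathbf{W}=\mathbf{1}$ (or the measure vector for \textsf{SUM}), the ``\textsf{Group\_By}'' case perturbs predicates before grouping, and the workload version (Algorithm~\ref{alg:wq}) applies $\mathbf{PM_A}$ to the strategy matrices $\mathbf{A}_i$ whose reconstruction $\hat{\mathbf{P}}^{\mathbf{L}}_i = \mathbf{A}_i^{+}\hat{\mathbf{A}}_i$ is again post-processing, so all inherit $\epsilon$-DP via the same composition-plus-post-processing argument.

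The main obstacle I anticipate is not the composition bookkeeping but \emph{justifying that the mechanism accesses the data only through the privatized predicate}, i.e.\ that the noise is genuinely injected into the query procedure rather than conditioned on the true join result. One must verify that the dependence on $\mathbf{D}_s$ enters only through $\hat{\Phi}\cdot\mathbf{W}$ after $\hat{\Phi}$ has been fixed, so that no sensitive quantity leaks outside the perturbed predicate; this is where the ``input perturbation'' framing of DP-starJ does the real work. A secondary subtlety is confirming that the neighboring-instance notion for the $(1,k)$-private case—where the fact table itself is sensitive—is also covered; there the fact-table change is absorbed by $\mathbf{W}$ and the $(1,0)$ argument (bounded global sensitivity under the Laplace mechanism) applies, but the statement as phrased should make explicit which scenario's sensitivity each term of the composition is accounting for.
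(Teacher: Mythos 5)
Your proposal takes essentially the same route as the paper's own proof: both reduce the claim to Theorem~\ref{theo:51} applied to each of the $n$ dimension predicates with budget $\epsilon/n$, and then compose across predicates to obtain total budget $\epsilon$ (the paper writes this composition out as a product of probabilities $Pr[\hat{\phi}_{a_1}]\cdots Pr[\hat{\phi}_{a_n}]$ using independence of the dimension tables, while you invoke the sequential composition theorem --- the same bookkeeping). Your additional explicit step --- treating the evaluation $\hat{\Phi}\cdot\mathbf{W}$ as post-processing and flagging that one must check $\mathbf{W}$ does not itself leak the protected information --- goes beyond the paper's proof, which stops once $\hat{\Phi}$ is shown private and handles DP-starJ with ``in the same way,'' so it strengthens rather than diverges from the argument.
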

\vspace{-2ex}\begin{proof}
The Predicate Mechanism decomposes $\Phi$ into dimension table predicates $\phi_{a_i}$ based on $Q$ and adds noise to the predicates $\phi_{a_i}$, the proof of PM is transformed into proving that each predicates $\hat{\phi}_{a_i}$ satisfies $\frac{\epsilon}{n}$-differential privacy according to the Theorem~\ref{theo:51}, and whether the predicate $\hat{\Phi}$ satisfies $\epsilon$-differential privacy. According to the fact that $\Phi := \phi_{a_1} \wedge \cdot \cdot \cdot\wedge \phi_{a_1}$ and each $\phi_{a_1}$ is independent of each other, we thus have $Pr[\Phi] = Pr[\phi_{a_1} \wedge \cdot \cdot \cdot\wedge \phi_{a_1}] = Pr[\phi_{a_1}] \cdot Pr[\phi_{a_2}]... Pr[\phi_{a_n}]$. \\ Meanwhile, each predicates $\phi_{a_i}$ satisfies $\frac{\epsilon}{n}$-differential privacy, $Pr[\phi_{a_i}] \leq e{^{\frac{\epsilon}{n}}}\cdot Pr[\hat{\phi}_{a_i}]$. Therefore, 
\begin{equation}
\begin{split}
    Pr[\Phi] &= Pr[\phi_{a_1}] \cdot  Pr[\phi_{a_2}] \cdot \cdot \cdot Pr[\phi_{a_n}] \\
    &\leq e{^\epsilon} \cdot (Pr[\hat{\phi}_{a_1}]\cdot Pr[\hat{\phi}_{a_2}]\cdot \cdot \cdot Pr[\hat{\phi}_{a_n}]) = e{^\epsilon} \cdot Pr[\hat{\Phi}]
\end{split}
\end{equation}
The Predicate Mechanism satisfies $\epsilon$-differential privacy. Similar to Predicate Mechanism, DP-starJ can be proved to satisfy $\epsilon$-differential privacy in the same way.
\end{proof}

Both Algorithm~\ref{alg:cq}\&\ref{alg:wq} adopt the Predicate Mechanism, we shall study the privacy guarantee of them accordingly as follows.
\vspace{-2ex}\begin{theorem}
  Algorithm~\ref{alg:cq} satisfies $\epsilon$-differential privacy.
\end{theorem}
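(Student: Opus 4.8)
The plan is to reduce the claim to the $\epsilon$-differential privacy of the Predicate Mechanism already established above, since Algorithm~\ref{alg:cq} shares its skeleton: it extracts the conjunctive predicate $\Phi = \phi_{a_1} \wedge \cdots \wedge \phi_{a_n}$, splits the budget uniformly as $\epsilon_i = \epsilon/n$, perturbs each per-attribute predicate $\phi_{a_i}$, conjoins the noisy pieces into $\hat{\Phi}$, and finally computes $\hat{Q}_c(\mathbf{D}_s) = \hat{\Phi}\cdot\mathbf{W}$. The only structural difference is that the generic Laplace step of the Predicate Mechanism is here delegated to $\mathbf{PM_A}$ (Algorithm~\ref{alg:na}). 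Hence the first step is simply to invoke Theorem~\ref{theo:51}: each call $\hat{\phi}_{a_i} \leftarrow \mathbf{PM_A}(\phi_{a_i},\epsilon_i)$ is $\epsilon_i$-differentially private, regardless of whether $\phi_{a_i}$ is a point constraint or a range constraint.

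Next I would compose the $n$ per-attribute guarantees. Because the dimension tables in a star schema are mutually independent and each $\phi_{a_i}$ is perturbed using only its own attribute's domain together with an independent Laplace draw, the joint output distribution factorizes as $Pr[\hat{\Phi}] = \prod_{i=1}^{n} Pr[\hat{\phi}_{a_i}]$. Multiplying the $n$ factors, each obeying the $\epsilon/n$-bound, yields $Pr[\Phi] \le e^{\epsilon}\,Pr[\hat{\Phi}]$ over any pair of neighboring instances, exactly mirroring the Predicate Mechanism argument. This factorization-plus-composition step is the heart of the proof; the one subtlety to check is that for a range predicate $\mathbf{PM_A}$ internally splits $\epsilon_i$ across the two endpoints $\hat{l},\hat{r}$, so I would confirm (again via Theorem~\ref{theo:51}) that this internal split still delivers $\epsilon_i$-DP before feeding it into the outer composition.

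Finally, the last line $\hat{Q}_c(\mathbf{D}_s) = \hat{\Phi}\cdot\mathbf{W}$ evaluates the already-noised predicate against the fixed data matrix $\mathbf{W}$ (the all-ones matrix for a COUNT, attribute values for a SUM). Since $\hat{\Phi}$ is produced by a mechanism that does not re-access the raw private data after perturbation, this multiplication is a deterministic post-processing of a differentially private quantity and therefore preserves $\epsilon$-DP. I expect the main obstacle to be presentational rather than technical: making the independence and factorization rigorous under the scenario-dependent neighboring-instance definitions (the $(a,b)$-private cases), so that a single deleted tuple in one private dimension table cannot leak through the perturbation of the other attributes' predicates. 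Once that independence is justified, the composition and post-processing steps are routine and the theorem follows.
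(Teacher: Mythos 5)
Your proof is correct and takes essentially the same route as the paper's: both reduce Algorithm~\ref{alg:cq} to Theorem~\ref{theo:51} --- each call $\mathbf{PM_A}(\phi_{a_i}, \epsilon/n)$ is $\epsilon/n$-differentially private --- and then combine the $n$ per-predicate guarantees into a single $\epsilon$ guarantee. The only substantive difference is how that combination is justified. You argue via independence: the dimension tables and the Laplace draws are independent, so $Pr[\hat{\Phi}] = \prod_{i=1}^{n} Pr[\hat{\phi}_{a_i}]$ and the per-factor bounds multiply (this mirrors the paper's earlier proof for the Predicate Mechanism itself). The paper's proof of this particular theorem instead invokes sequential composition, explicitly on the grounds that adding or removing a record can affect \emph{all} predicates simultaneously. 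That is the cleaner move, and it dissolves the obstacle you flag in your final paragraph: sequential composition requires no cross-table independence --- it is designed precisely for the case where one tuple influences every composed mechanism --- so the budgets simply add to $\epsilon$ under any of the $(a,b)$-private neighboring definitions, with no further argument needed. Your explicit post-processing step for $\hat{Q}_c(\mathbf{D}_s) = \hat{\Phi} \cdot \mathbf{W}$ has no counterpart in the paper's proof; be aware that calling it post-processing is a slight stretch (for \textsf{SUM} queries $\mathbf{W}$ consists of raw attribute values, so the final evaluation does touch the private data), but this caveat applies equally to the paper's own treatment, which leaves the evaluation step unaddressed.
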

\vspace{-2ex}\begin{proof}
 Algorithm~\ref{alg:cq} decomposes query predicates by dimension tables, allocating the privacy budget of $\frac{\epsilon}{n}$ to each predicate. According to Theorem~\ref{theo:51}, each noisy predicate satisfies $\frac{\epsilon}{n}$-differential privacy via Algorithm~\ref{alg:na}. Within each predicate, there is sequential composition because adding or removing a record affects all predicates. According to Sequential Composition~\cite{dwork2014algorithmic}, Algorithm~\ref{alg:cq} satisfies $\epsilon$-differential privacy. 
 \end{proof}
\begin{table*}[t]
  \centering
  \tabcolsep = 0.026\linewidth
    \caption{Relative error(\%) of various mechanisms PM, R2T, LS on SSB queries by varying $\epsilon$.}\label{tab:t1}
  \begin{tabular}{c|c||c|c|c|c||c|c|c||c|c}
    \hline
    \multicolumn{2}{c||}{Query type} & \multicolumn{4}{c||}{\textsf{COUNT}} & \multicolumn{3}{c||}{\textsf{SUM}} & \multicolumn{2}{c}{\textsf{GROUP BY}}  \\
    \hline
    \multicolumn{2}{c||}{Query}&$Q_{c1}$&{$Q_{c2}$}&{$Q_{c3}$}&{$Q_{c4}$}&{$Q_{s2}$}&{$Q_{s3}$}&{$Q_{s4}$}& {$Q_{g2}$}&{$Q_{g4}$} \\
    \hline\hline
    \multirow{3}*{$\epsilon = 0.1$}&PM&11.89 &9.46&19.02&8.22&12.07&16.3&17.36&11&28.63 \\
    \cline{2-11}
    &R2T& 120.87 &41.51&29.63&20.41&80.61&80.22&80.14&\multicolumn{2}{c}{Not supported*} \\ 
    \cline{2-11}
     & LS& 180.9 &73.39&78.12&80.44&\multicolumn{5}{c}{Not supported} \\ 
    \hline\hline
    \multirow{3}*{$\epsilon = 0.2$} & PM & 11.93 &9.28&16.48&5.12&11.55&13.07&12.39&10.6&18.8 \\
    \cline{2-11}
    & R2T & 59.76 &30.38&19.4&15.16&79.91&80.17&79.83&\multicolumn{2}{c}{Not supported*} \\ 
    \cline{2-11}
     & LS& 121.68 &61.8&58.61&83&\multicolumn{5}{c}{Not supported} \\ 
    \hline\hline
     \multirow{3}*{$\epsilon = 0.5$} & PM&8.66&7.61&15.42&4.3&11.58&12.45&10.43&9.88&11.83 \\
    \cline{2-11}
    & R2T & 84.48 &22.9&19.67&11.55&79.46&80.08&79.61&\multicolumn{2}{c}{Not supported*} \\ 
    \cline{2-11}
     & LS& 86.84 &47.6&20.38&52.09&\multicolumn{5}{c}{Not supported} \\
    \hline\hline
     \multirow{3}*{$\epsilon = 0.8$}& PM &5.1&7.86&13.35&3.71&11.43&12.59&7.58&9.25&6.45 \\
    \cline{2-11}
    & R2T & 76.16&17.46&14.56&9.21&79.21&80.03&79.17&\multicolumn{2}{c}{Not supported*} \\ 
    \cline{2-11}
     & LS &77.23&32.99&13.28&31.89&\multicolumn{5}{c}{Not supported} \\ 
    \hline\hline
     \multirow{3}*{$\epsilon = 1$} & PM&5&7.53&11.76&1.92&10.51&12.18&5.02&8.99&4.02 \\
    \cline{2-11}
    & R2T&61.77&13.1&15.63&7.71&79.04&79.97&79.38&\multicolumn{2}{c}{Not supported*} \\ 
    \cline{2-11}
     & LS &84.06&27.99&20.19&14.97&\multicolumn{5}{c}{Not supported} \\ 
    \hline
  \end{tabular}
  \begin{tablenotes}
  \footnotesize
   \item{*} It is a future work of ~\cite{dong2022r2t}. 
  \end{tablenotes}
\end{table*}
\begin{theorem}
  Algorithm~\ref{alg:wq} satisfies $\epsilon$-differential privacy.
\end{theorem}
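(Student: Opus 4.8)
The plan is to mirror the structure of the proofs for Algorithms~\ref{alg:pm} and~\ref{alg:cq}, reducing the privacy of the entire workload procedure to the per-attribute guarantee of Algorithm~\ref{alg:na} (Theorem~\ref{theo:51}) combined with the post-processing and sequential composition properties of differential privacy. Concretely, I would first isolate the only randomized step in Algorithm~\ref{alg:wq}, namely the call $\hat{\phi}_{a_i} \leftarrow \mathbf{PM_A}(\phi_{a_i},\epsilon_i)$ on Line~6, which perturbs the strategy matrix $\mathbf{A}_i$ derived from $\mathbf{P}^{\mathbf{L}}_i$. By Theorem~\ref{theo:51}, each such call is $\epsilon_i$-differentially private, and Line~2 sets $\epsilon_i = \epsilon/n$, so each invocation satisfies $\frac{\epsilon}{n}$-differential privacy.

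Next I would argue that every operation surrounding this randomized step is either data-independent or pure post-processing, and hence cannot increase the privacy loss. The one-hot encoding on Line~1 and the matrix decomposition $\mathbf{A}_i \leftarrow \mathbf{MatrixDecom}(\mathbf{P}^{\mathbf{L}}_i)$ on Line~4 depend only on the public query workload $\mathbf{L}$, not on the sensitive instance $\mathbf{D}_s$, so they incur no privacy cost. The reconstruction $\hat{\mathbf{P}}^{\mathbf{L}}_i = \mathbf{A}^{+}_i \hat{\mathbf{A}}_i$ on Line~8 is a deterministic linear map applied to the noisy output $\hat{\mathbf{A}}_i$, and the final aggregation $\hat{Q}_L(\mathbf{D}_s) = \hat{\mathbf{L}} \cdot \mathbf{W}$ on Line~11 merely evaluates the already-noised workload against the data; both are post-processing and therefore preserve the guarantee by the post-processing property of differential privacy.

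Then I would compose across the $n$ dimension tables. As in the proof of Algorithm~\ref{alg:cq}, a neighboring pair $\mathbf{D}_s,\mathbf{D}'_s$ under the $(\cdot,k)$-private scenarios may differ in tuples referencing several private dimension tables at once, so a single record change can affect all $n$ predicate matrices $\mathbf{P}^{\mathbf{L}}_i$; the $n$ perturbations must therefore be combined by sequential composition rather than treated as disjoint. Summing the budgets gives $\sum_{i=1}^n \epsilon_i = n\cdot(\epsilon/n) = \epsilon$, which establishes that Algorithm~\ref{alg:wq} is $\epsilon$-differentially private.

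The step I expect to be the main obstacle is verifying that the decomposition--reconstruction pipeline really behaves as claimed. Since $\mathbf{PM_A}$ calibrates its noise to the domain size of the attribute rather than to $\mathbf{A}_i$ itself, I would need to confirm that perturbing the strategy matrix matches exactly the sensitivity assumed in Theorem~\ref{theo:51}, and that applying the pseudoinverse $\mathbf{A}^{+}_i$ never re-accesses $\mathbf{D}_s$ (so that the reconstruction is genuinely post-processing). Once this correspondence between the strategy-matrix perturbation and the per-attribute predicate perturbation is pinned down, the composition and post-processing steps follow routinely.
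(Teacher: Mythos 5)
Your proposal is correct and follows essentially the same route as the paper's own (much terser) proof: reduce privacy of Algorithm~\ref{alg:wq} to the per-attribute guarantee of Algorithm~\ref{alg:na} (Theorem~\ref{theo:51}), treat the decomposition/reconstruction and final evaluation as data-independent steps or post-processing, and combine the $n$ perturbations with budget $\epsilon/n$ each --- exactly the structure the paper spells out for Algorithm~\ref{alg:cq} and invokes implicitly here. In fact you make explicit the sequential-composition accounting and the post-processing status of $\mathbf{A}^{+}_i\hat{\mathbf{A}}_i$, both of which the paper's two-sentence proof leaves unstated, and the calibration concern you flag (noise scaled to $dom(a_i)$ versus the strategy matrix $\mathbf{A}_i$) is likewise glossed over by the paper itself.
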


\begin{proof}
Algorithm~\ref{alg:wq} is suitable to answer star-join workload queries indirectly, by first perturbing a set of intermediate predicates under differential privacy via Algorithm~\ref{alg:na}, and then combining their predicates to answer the star-join workload queries. Thus, Algorithm~\ref{alg:wq} satisfies $\epsilon$-differential privacy.
\end{proof}

Besides the privacy guarantee, we now conduct a theoretical study on the utility of the Predicate Mechanism.  

\begin{theorem}[Loose Bound of Predicate Mechanism]
  Let $Q$ be the star-join query with $n$ dimension tables and a fact table, the variance of using Predicate Mechanism is $(\frac{2n^2}{\epsilon^2})^n \cdot \prod_{i=1}^n dom(a_i)^2$. 
\end{theorem}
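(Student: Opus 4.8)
The plan is to trace how the Laplace noise injected into each single-attribute predicate by $\mathbf{PM_A}$ (Algorithm~\ref{alg:na}) propagates through the conjunction to the final answer $\hat{Q}(\mathbf{D}_s)=\hat{\Phi}\cdot\mathbf{W}=(\hat{\phi}_{a_1}\wedge\cdots\wedge\hat{\phi}_{a_n})\cdot\mathbf{W}$. First I would fix the per-predicate noise scale. In the point-constraint branch, each predicate $\phi_{a_i}$ receives an independent draw $L_i\sim Lap(dom(a_i)/\epsilon_i)$ with $\epsilon_i=\epsilon/n$, hence $L_i\sim Lap(n\cdot dom(a_i)/\epsilon)$. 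Using the standard identity $Var(Lap(b))=2b^2$, this gives
\begin{equation}
Var(L_i)=2\left(\frac{n\cdot dom(a_i)}{\epsilon}\right)^2=\frac{2n^2\,dom(a_i)^2}{\epsilon^2}.
\end{equation}
I would also note that $\mathbf{W}$ does not inflate the bound in the counting case ($\mathbf{W}=\mathbf{1}$), and that the range-constraint branch yields the same order after accounting for its two endpoint perturbations, so the point-constraint computation is representative.

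Next I would exploit that the $n$ dimension tables are perturbed independently, so the $L_i$ are mutually independent, and that the conjunction $\wedge$ of the noisy predicates is realized as a product, i.e.\ $\hat{\Phi}=\prod_{i=1}^n\hat{\phi}_{a_i}$, so the noise enters multiplicatively. Propagating the variance through this product of independent factors and keeping the product of the per-factor noise variances yields
\begin{equation}
Var(\hat{Q}(\mathbf{D}_s))=\prod_{i=1}^n Var(L_i)=\prod_{i=1}^n\frac{2n^2\,dom(a_i)^2}{\epsilon^2}=\left(\frac{2n^2}{\epsilon^2}\right)^n\prod_{i=1}^n dom(a_i)^2,
\end{equation}
which is exactly the claimed expression.

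The main obstacle is the passage through the product. Since each factor $\hat{\phi}_{a_i}=\phi_{a_i}+L_i$ has a nonzero mean equal to the true $0/1$ predicate value $\mu_i$, the exact variance of a product of independent variables is $\prod_i\bigl(Var(L_i)+\mu_i^2\bigr)-\prod_i\mu_i^2$ rather than the clean $\prod_i Var(L_i)$, so cross terms coupling the noise variances with the true predicate values intervene. The ``loose'' qualifier is precisely what licenses isolating the product-of-variances as the leading contribution: because $dom(a_i)\ge 1$ and the factor $n^2/\epsilon^2$ is large in the regime of interest, each $Var(L_i)$ dominates the corresponding $\mu_i^2\in\{0,1\}$, so $Var(L_i)+\mu_i^2\approx Var(L_i)$ and the subtracted $\prod_i\mu_i^2$ is negligible. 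I would therefore present the identity above as the dominant-order characterization of the variance, treating the injected Laplace noise as the driving source of randomness in each factor and collapsing the remaining mean-dependent cross terms, which is the intended justification for a deliberately loose bound.
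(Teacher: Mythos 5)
Your proposal follows essentially the same route as the paper's proof: compute the per-predicate Laplace variance $2\left(\frac{n\cdot dom(a_i)}{\epsilon}\right)^2$ under the budget split $\epsilon_i=\epsilon/n$, treat the conjunction $\hat{\Phi}=\prod_{i=1}^n\hat{\phi}_{a_i}$ as a product of independent factors, and take the product of the per-factor variances to obtain $(\frac{2n^2}{\epsilon^2})^n\prod_{i=1}^n dom(a_i)^2$. If anything, you are more careful than the paper, which simply asserts that each $\hat{\phi}_{a_i}$ has zero expectation ``due to the Laplace noise'' (making the product-of-variances identity exact), whereas you explicitly flag the nonzero means $\mu_i\in\{0,1\}$ and justify collapsing the mean-dependent cross terms as the deliberate looseness of the bound.
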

\begin{proof}
  Since the Predicate of star-join query, \ie $\Phi$, is the conjunction of each predicate of dimension tables $\phi_{a_i}$, $\Phi = \prod_{i=1}^n \phi_{a_i}$, and each predicate $\hat{\phi}_{a_i}$ satisfies $\frac{\epsilon}{n}$-differential privacy, the variance of each predicates $\hat{\phi}_{a_i}$ is $2(\frac{n \cdot dom(a_i)}{\epsilon})^2$ and the expectation is 0 due to the Laplace noise. In addition, as each dimension table is independent of each other, the variance of predicate mechanism is the multiplication of the variance of $\hat{\phi}_{a_i}$, $(\frac{2n^2}{\epsilon^2})^n \cdot \prod_{i=1}^n dom(a_i)^2$.
\end{proof}

\begin{theorem}[Tight Bound of Predicate Mechanism]
  Let $Q$ be the star-join query with $n$ dimension tables and a fact table, the variance of using the Predicate Mechanism is $(\frac{2n^2}{\epsilon^2}) \cdot \sum_{i=1}^n dom(a_i)^2$. 
\end{theorem}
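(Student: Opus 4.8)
The plan is to show that, in contrast to the loose bound, the per-table noise variances should be \emph{added} rather than multiplied, because the error injected by perturbing the conjunction $\Phi = \prod_{i=1}^n \phi_{a_i}$ propagates additively across the $n$ dimension tables. Recall from the loose-bound analysis that each noisy predicate is $\hat{\phi}_{a_i} = \phi_{a_i} + \eta_i$ with $\eta_i \sim Lap(n \cdot dom(a_i)/\epsilon)$, so $\eta_i$ is zero-mean with $Var(\eta_i) = 2n^2 dom(a_i)^2/\epsilon^2$, and these noises are independent across tables. The loose bound arises from treating $\hat{\Phi} = \prod_i \hat{\phi}_{a_i}$ as a product of independent variables and multiplying the variances; the goal here is to keep the true predicate values $\phi_{a_i} \in \{0,1\}$ in the accounting, which collapses the product into a sum.

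First I would linearize the conjunction around the true predicate. Expanding $\hat{\Phi} = \prod_i(\phi_{a_i}+\eta_i)$ and grouping by total degree in the noise gives
\begin{equation}
\hat{\Phi} - \Phi = \sum_{i=1}^n \eta_i \prod_{j\neq i}\phi_{a_j} + \big(\text{terms of degree} \ge 2 \text{ in the } \eta_i\big),
\end{equation}
so the leading deviation is a \emph{sum} of $n$ independent contributions, the $i$-th being the perturbation of table $i$ weighted by the (indicator-valued) product of the remaining true predicates. Next I would take the variance of this linear part. Independence and zero mean of the $\eta_i$ kill all cross-covariances, so
\begin{equation}
Var\!\Big(\sum_{i=1}^n \eta_i \prod_{j\neq i}\phi_{a_j}\Big) = \sum_{i=1}^n Var(\eta_i)\Big(\prod_{j\neq i}\phi_{a_j}\Big)^2 \le \sum_{i=1}^n Var(\eta_i),
\end{equation}
where the inequality uses $\prod_{j\neq i}\phi_{a_j} \in \{0,1\}$, with equality when the other predicates are all satisfied. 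Substituting $Var(\eta_i) = 2n^2 dom(a_i)^2/\epsilon^2$ yields $(2n^2/\epsilon^2)\sum_{i=1}^n dom(a_i)^2$, the claimed tight bound, and the multiplicative $\prod$ of the loose bound is replaced by an additive $\sum$.

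The step I expect to be the main obstacle is justifying that the degree-$\ge 2$ terms do not change the leading order. An exact computation gives $Var(\hat\Phi) = \prod_i(\phi_{a_i}^2 + Var(\eta_i)) - \prod_i\phi_{a_i}^2$, whose expansion contains higher cross terms (\eg $\sum_{i<j}Var(\eta_i)Var(\eta_j)$) on top of the leading $\sum_i Var(\eta_i)$; the loose bound is in fact the highest-degree term of this same expansion while the tight bound is the lowest. To turn the first-order heuristic into an honest statement I would instead argue set-theoretically: the symmetric difference between the perturbed and true result sets is contained in the union of the $n$ per-table symmetric differences, so by a union bound the count error is dominated by a sum of $n$ independent per-table errors, whose variance is at most $\sum_i Var(\eta_i)$. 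This route avoids the Taylor remainder entirely and delivers $(2n^2/\epsilon^2)\sum_i dom(a_i)^2$ directly as the governing contribution of the realized mechanism, in which each perturbed predicate remains a $\{0,1\}$ indicator.
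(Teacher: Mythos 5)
Your proposal is correct at the same level of rigor as the paper's own argument, but it reaches the bound by a genuinely different route. The paper's proof makes one structural move up front: it rewrites the conjunction as an indicator of a sum, $\Phi = \mathbb{I}[\sum_{i=1}^n \phi_{a_i} = n]$, so that the per-table noises enter \emph{additively} inside the indicator; it then asserts that ``the introduction of an indication function does not cause any extra errors'' and concludes, by independence, that the mechanism's variance is $\sum_i Var(\eta_i) = (\frac{2n^2}{\epsilon^2})\sum_{i=1}^n dom(a_i)^2$. You instead keep the product form $\prod_i(\phi_{a_i}+\eta_i)$, linearize, and use the fact that the binary weights $\prod_{j\neq i}\phi_{a_j} \in \{0,1\}$ collapse the leading term to a sum with variance at most $\sum_i Var(\eta_i)$. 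The two proofs exploit the same core insight --- binary predicates turn multiplicative error into additive error --- but your treatment is more transparent about where the rigor gap sits: the higher-order cross terms you isolate (and your observation that the paper's loose bound is the top-degree term and the tight bound the bottom-degree term of one and the same expansion) are exactly what the paper sweeps under its unproven ``no extra errors'' assertion about the indicator. Your fallback argument via containment of the symmetric difference of result sets in the union of per-table symmetric differences is also closer to the semantics of the realized mechanism (Algorithm 2 outputs genuine $\{0,1\}$ predicates, not real-valued ones), though note it still shares the paper's conflation of two different quantities --- the variance of the noise added to predicate thresholds versus the variance of the resulting count --- so neither your union-bound route nor the paper's indicator rewrite fully closes that gap; within that shared convention, both derivations are sound and yield the stated formula.
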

\begin{proof}
  Since the input of $\Phi$ is the conjunction of the binary, then $\Phi$ can be expressed as an indicate function, $\Phi = \mathbb{I}[\sum_{i=1}^n \phi_{a_i} = n]$. Moreover, each predicate $\hat{\phi}_{a_i}$ satisfies $\frac{\epsilon}{n}$-differential privacy, the variance of each predicates $\hat{\phi}_{a_i}$ is $2(\frac{n \cdot dom(a_i)}{\epsilon})^2$ and the expectation is 0 due to the Laplace noise. In addition, the dimension table is independent of each other and the introduction of an indication function does not cause any extra errors. Therefore, the variance of predicate mechanism is the sum of the variance of $\hat{\phi}_{a_i}$, $(\frac{2n^2}{\epsilon^2}) \cdot \sum_{i=1}^n dom(a_i)^2$.
\end{proof}

\begin{table*}[h]
  \centering
  \tabcolsep = 0.015\linewidth
    \caption{Comparison between PM, R2T, TM on $k$-star queries by varying $\epsilon$.}\label{tab:t2}
  \begin{tabular}{c|c|c||c|c||c|c||c|c}
    \hline
    \multicolumn{3}{c||}{Privacy budget} & \multicolumn{2}{c||}{$\epsilon = 0.1$} & \multicolumn{2}{c||}{$\epsilon = 0.5$} & \multicolumn{2}{c}{$\epsilon = 1$}  \\
    \hline
    \multicolumn{3}{c||}{Result type}&Relative error(\%)&Time(s)&Relative error(\%)&Time(s)&Relative error(\%)&Time(s) \\
    \hline\hline
    \multirow{6}*{\textbf{Deezer}}&\multirow{3}*{$Q_{2\ast}$}&PM&38.25&0.14&35.91&0.11&30.53&0.11 \\
    \cline{3-9}
    &&R2T&52.45&15.02&74.56&15.03&63.36&15.46 \\ 
    \cline{3-9}
     &&TM&2431.4&5.53&339.55&5.27&279.18&4.9 \\ 
    \cline{2-9}
    &\multirow{3}*{$Q_{3\ast}$} &PM&65.06&0.84&58.85&1.25&56.67&1.15 \\
    \cline{3-9}
    &&R2T&\multicolumn{6}{c}{Over time limit} \\ 
    \cline{3-9}
     &&TM&385.75&164.05&306.49&164.45&117.3&160.77 \\ 
    \hline\hline
    \multirow{6}*{\textbf{Amazon}}&\multirow{3}*{$Q_{2\ast}$}&PM&17.67&0.67&11.41&0.60&7.39&0.75 \\
    \cline{3-9}
    &&R2T&23.91&127.25&10.63&131.86&8.38&145.39 \\ 
    \cline{3-9}
     &&TM&3750.34&80.4&482.01&83.51&42.03&76.33 \\ 
    \cline{2-9}
    &\multirow{3}*{$Q_{3\ast}$} &PM&16.25&4.62&14.78&4.70&7.90&4.33 \\
    \cline{3-9}
    &&R2T&\multicolumn{6}{c}{Over time limit} \\ 
    \cline{3-9}
     &&TM&\multicolumn{6}{c}{Over time limit} \\
    \hline
  \end{tabular}

\end{table*}

\begin{figure}[h]
  \centering
  \includegraphics[width=\linewidth]{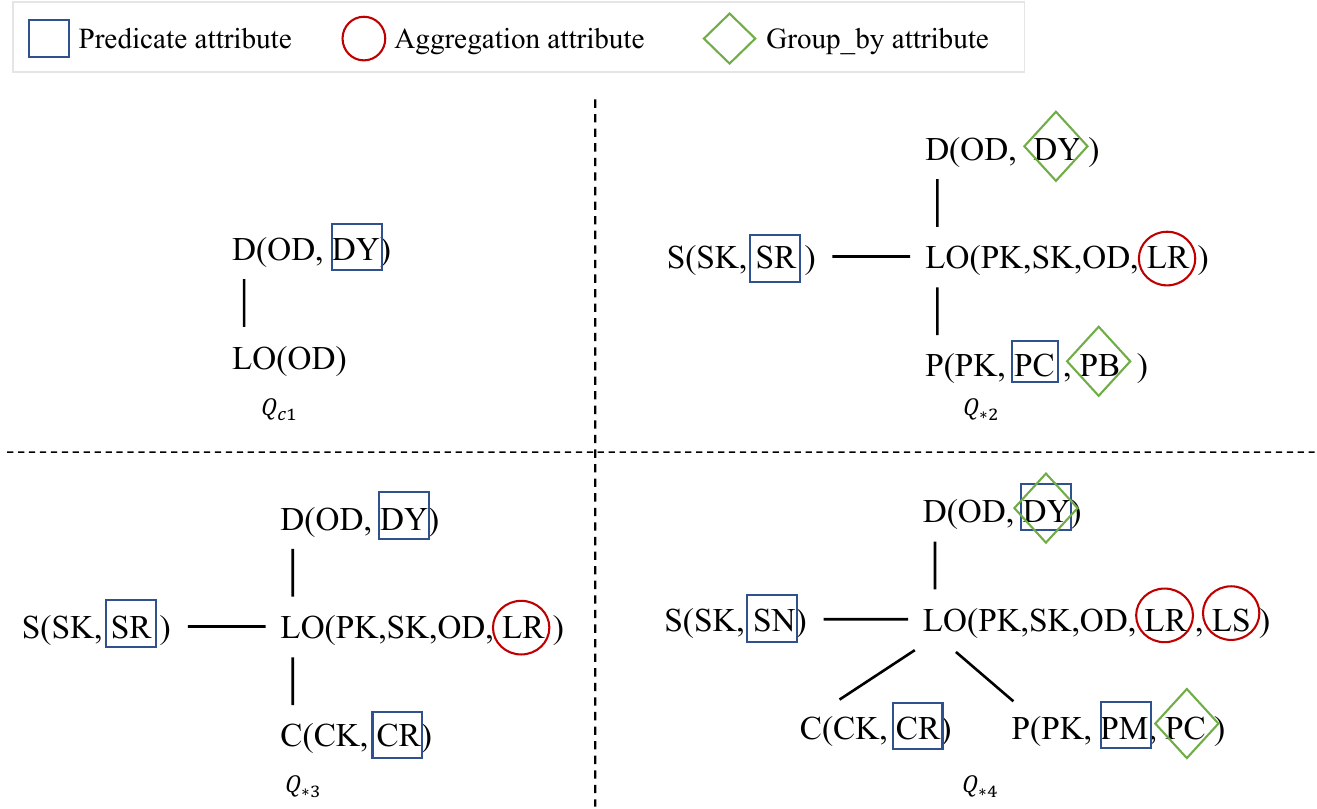}\vspace{-0.3cm}
  \caption{The structure of SSB queries}\label{fig_13}
\end{figure}

\begin{figure*}[h]
  \centering
  \includegraphics[width=\linewidth,height=7.5cm]{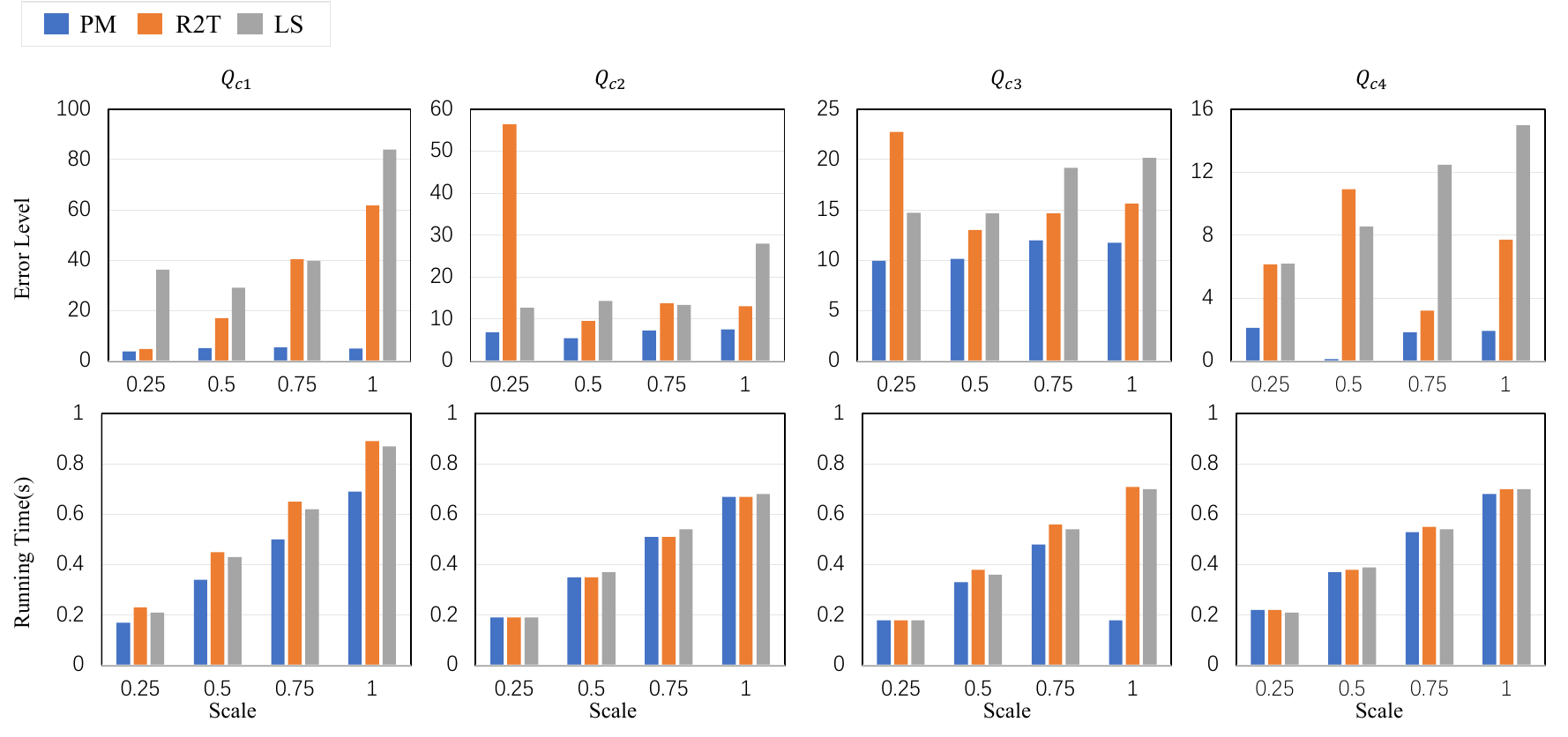}\vspace{-0.3cm}
  \caption{Running times and error level of PM, R2T, LS for different data scales (\textsf{COUNT}).}\label{fig_3}
\end{figure*}

\begin{figure*}[h]
  \centering
  \includegraphics[width=0.8\linewidth,height=7.5cm]{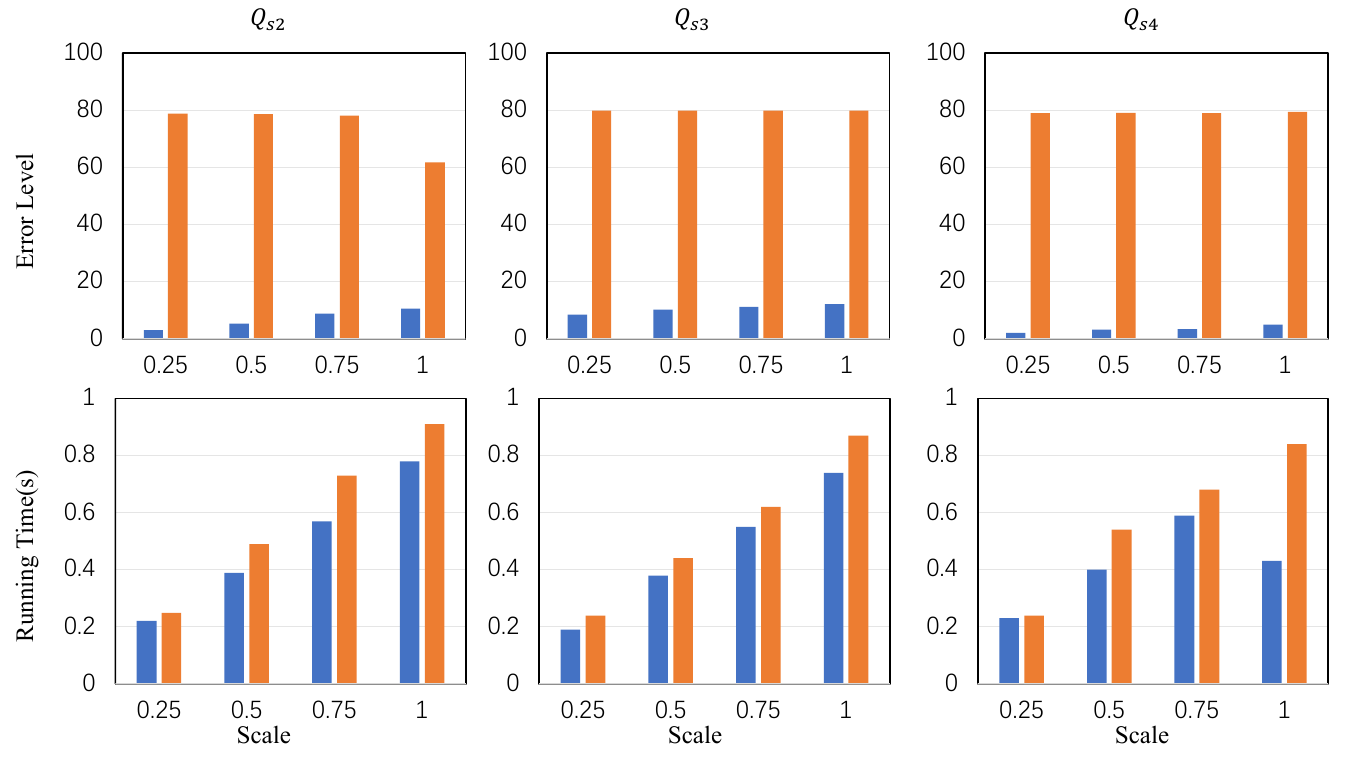}\vspace{-0.3cm}
  \caption{Running times and error level of PM and R2T for different data scales (\textsf{SUM}).}\label{fig_6}
\end{figure*}

\begin{figure*}[t]
  \centering
  \includegraphics[width=\linewidth]{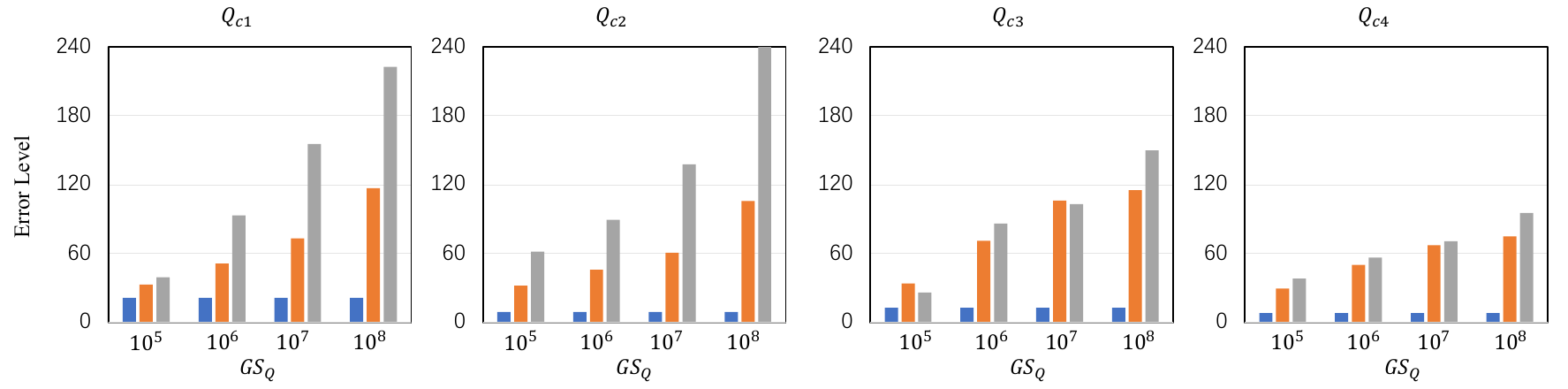}\vspace{-2ex}
  \caption{Error level of PM, R2T, LS for different $GS_Q$.}\label{fig_4}\vspace{-1ex}
\end{figure*}

\begin{figure*}[t]
  \centering
  \includegraphics[width=0.8\linewidth,height=7.5cm]{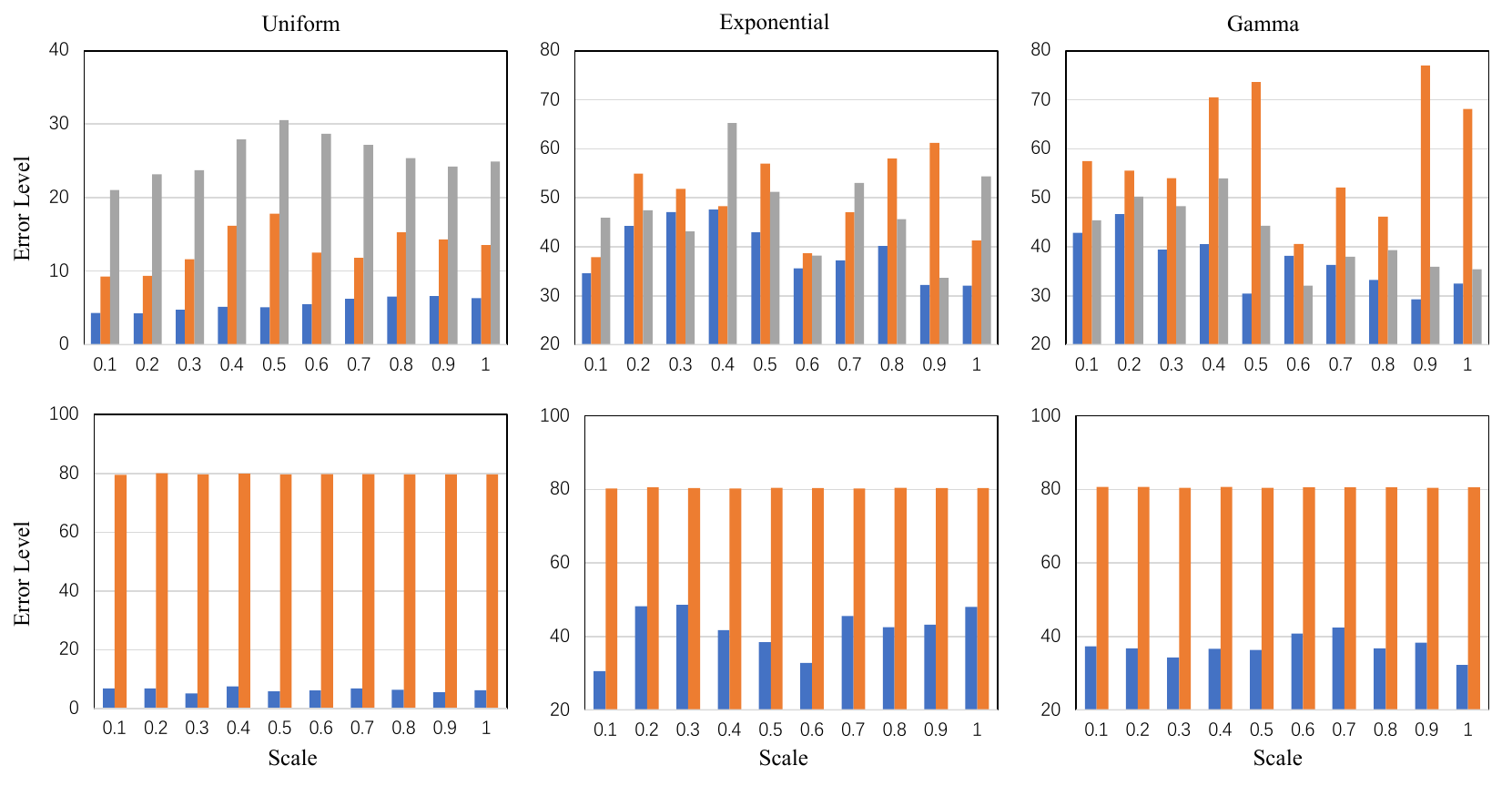}\vspace{-2ex}
  \caption{Error level of PM, R2T, LS for different distributions on $Q_{c3}$(top) and $Q_{s3}$(bottom) with different data scales.}\label{fig_9}\vspace{-1ex}
\end{figure*}

\begin{figure}[t]
  \centering
  \includegraphics[width=0.8\linewidth]{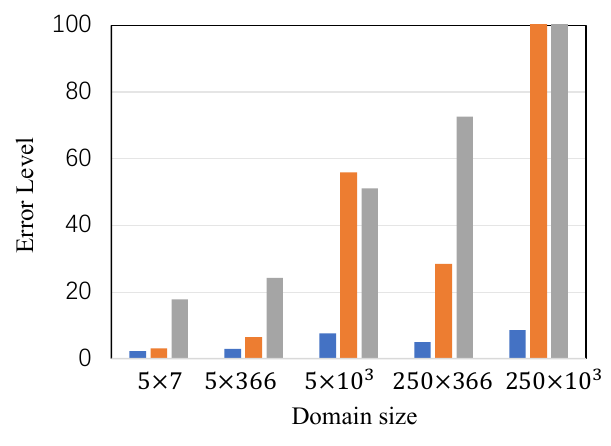}\vspace{-2ex}
  \caption{Error level of PM, R2T, LS for different domain sizes.}\label{fig_10}\vspace{-2ex}
\end{figure}

\section{Experiments}
We conducted empirical studies to test the performance of our model on a pair of benchmarking datasets. To evaluate the performance in various applications, the empirical study involves not only general star-join queries in \textsc{olap} scenarios, but also $k$-star counting queries in the graph, which is a fundamental task in graph database and representative instance of star-join in specific applications.  

For general star-join queries, we compare with a pair of state-of-the-art \textsc{dp}-compliant query schemes, namely R2T~\cite{dong2022r2t} and the local sensitivity-based mechanism (LS) \cite{tao2020computing}. 
For $k$-star counting queries, in line with~\cite{dong2022r2t}, we compare PM with R2T and naive truncation with smooth sensitivity (TM) \cite{kasiviswanathan2013analyzing}, which dominates LS in \textsc{dp}-compliant $k$-star tasks. 

\noindent\emph{\textbf{Dataset.}} 
To test the performance on general star-join queries, we perform experiments using the Star Schema Benchmark (SSB)~\cite{o2007star}, a variation of the TPC-H benchmark widely adopted in star-join studies~\cite{o2009star, sanchez2016review}. It changes the snowflake model adopted in TPC-H into a star model. SSB has a fact table and four dimension ones. Each dimension table contains hierarchical attributes, the value of which can be categorized into three types based on the hierarchy: large, medium, and small. For example, the $Customer$ table contains attributes with different domain values of \emph{city}, \emph{region}, and \emph{address}. In a star join query, the predicate only involves one of them.


For $k$-star counting queries, we adopt two real-world network datasets~\cite{leskovec2016snap}, namely \textbf{Deezer} and \textbf{Amazon}.The former collects all friendship relations of users from 3 European countries using the music streaming service Deezer, containing 144,000 nodes (\ie users) and 847,000 edges (\ie friendships). The latter is an Amazon co-purchasing network, which contains 335,000 nodes and 926,000 edges. The $k$-star counting queries predicate refers to its node range, so the domain size is its number of vertices.

\subsection{Setup}
\emph{\textbf{Queries.}} 
We test 9 queries out of three standard star-join tasks from SSB, including counting queries $\{Q_{c1}, Q_{c2}, Q_{c3}, Q_{c4}\}$, sum queries $\{Q_{s2}, Q_{s3}, Q_{s4}\}$, and group-by queries $\{Q_{g2}, Q_{g4}\}$. As an example, $\{Q_{c1}\}$ involves a dimension table, $\{Q_{c2}, Q_{c3}\}$ contains 3 dimension tables, and $Q_{c4}$ involves all the dimension tables. The structure of these queries are outlined in Figure~\ref{fig_13}.

For star-join workload queries, we utilize two types of the counting queries, $\{W_{1}, W_{2}\}$. $W_{1}$ contains all point constraints for one of three dimension tables. $W_{2}$ contains constraints for three dimension tables, one of which is a cumulative distribution (\ie each query sums the unit counts in a range $[1, i]$, where $i$ is in the domain of an attribute). The $\{W_{1}, W_{2}\}$ are as follows:
\vspace{-1ex}
\begin{equation}
    W_1 = \left [
    \setlength{\arraycolsep}{0.6pt}
    \begin{array}{ccccccc:ccccc:ccccc}
    1&0&0&0&0&0&0&1&0&0&0&0&1&0&0&0&0 \\
    0&1&0&0&0&0&0&1&0&0&0&0&1&0&0&0&0 \\
    0&0&1&0&0&0&0&1&0&0&0&0&1&0&0&0&0 \\
    0&0&0&1&0&0&0&1&0&0&0&0&1&0&0&0&0 \\
    0&0&0&0&1&0&0&1&0&0&0&0&1&0&0&0&0 \\
    0&0&0&0&0&1&0&1&0&0&0&0&1&0&0&0&0 \\
    0&0&0&0&0&0&1&1&0&0&0&0&0&1&0&0&0 \\
    0&0&1&1&0&0&0&0&1&0&0&0&0&1&0&0&0 \\
    0&0&0&1&1&0&0&0&0&1&0&0&0&1&0&0&0 \\
    0&0&0&0&1&1&0&0&0&0&1&0&0&1&0&0&0 \\
    0&0&0&0&0&1&1&0&0&0&0&1&0&1&0&0&0 
    \end{array}
    \right ],
    W_2 = \left [
    \setlength{\arraycolsep}{0.6pt}
    \begin{array}{ccccccc:ccccc:ccccc}
    1&0&0&0&0&0&0&0&0&1&0&0&1&0&0&0&0 \\
    1&1&0&0&0&0&0&0&0&1&0&0&1&0&0&0&0 \\
    1&1&1&0&0&0&0&1&0&0&0&0&1&0&0&0&0 \\
    1&1&1&1&0&0&0&0&0&1&0&0&0&1&0&0&0 \\
    1&1&1&1&1&0&0&0&0&0&1&0&0&0&1&0&0 \\
    1&1&1&1&1&1&0&0&0&0&0&1&1&0&0&0&0 \\
    1&1&1&1&1&1&1&0&0&1&0&0&0&1&0&0&0 
    \end{array}
    \right ].
    \nonumber
\end{equation}

For $k$-star counting queries, we test two different tasks: 2-star counting $Q_{2\ast}$ and 3-star counting $Q_{3\ast}$.

\noindent\emph{\textbf{Evaluation Metrics.}} Relative error is used as the utility measure and the privacy budget is varied from \{0.1, 0.2, 0.5, 0.8, 1\}. In addition, we also evaluate the running time for all the compared solutions. 

\subsection{Empirical results}
In each experiment, we report the average response time of 10 independent runs, each of which is kept within a time limit (\ie 3 hours).

\noindent\emph{\textbf{Utility.}} We test the utility of different solutions by varying $\epsilon$, and the results are shown in Table~\ref{tab:t1} and~\ref{tab:t2}, respectively. As the privacy budget increases, the error level gradually decreases as expected. In particular, according to Table~\ref{tab:t1}, both PM and R2T achieve high utility under star-join count queries, while LS achieves poor utility except for very large $\epsilon$. R2T achieves similar utility as PM on counting queries, but is much worse on sum queries. Table~\ref{tab:t1} shows that PM achieves order-of-magnitude improvements over R2T and LS in terms of utility. More importantly, PM supports a wider variety of star-join queries than R2T and LS. Remarkably, in all star-join queries, PM consistently achieves errors below 20\% (even $<15\%$ when $\epsilon\ge 0.5$). . 

Obviously, PM performs better on the SSB dataset. This is because the error of PM is proportional to the sum of domains according to our theoretical study in Section~\ref{ssec54}. Therefore, larger dimension tables in star-join queries lead to smaller relative errors. Compared with R2T and LS, PM exhibits much little change by varying $\epsilon$. In general, the DP-starJ is more stable and accurate than R2T and LS in light of star-join queries. Similarly, Table~\ref{tab:t2} also justifies the superiority of PM in terms of utility on \textbf{Deezer} and \textbf{Amazon} for $k$-star queries, offering order-of-magnitude improvements over other methods in different cases. In workload queries, the error level of PM and WD mechanisms are shown in Figure~\ref{fig_5}. As the figure demonstrates, WD always introduces lower error than PM, especially on $W_{1}$.
\begin{figure}[t]
  \centering
  \includegraphics[width=\linewidth]{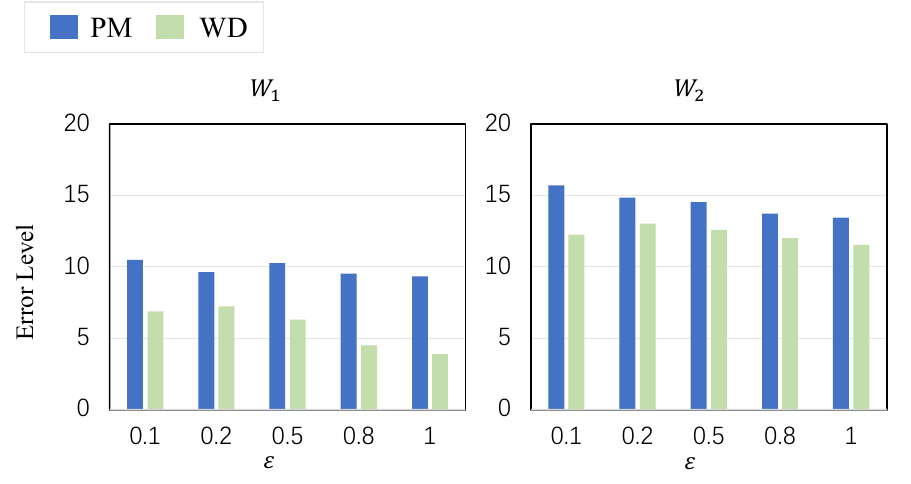}\vspace{-2ex}
  \caption{Error level of PM and WD for different $\epsilon$.}\label{fig_5}\vspace{-2ex}
\end{figure}

\begin{figure}[t]
  \centering
  \includegraphics[width=\linewidth]{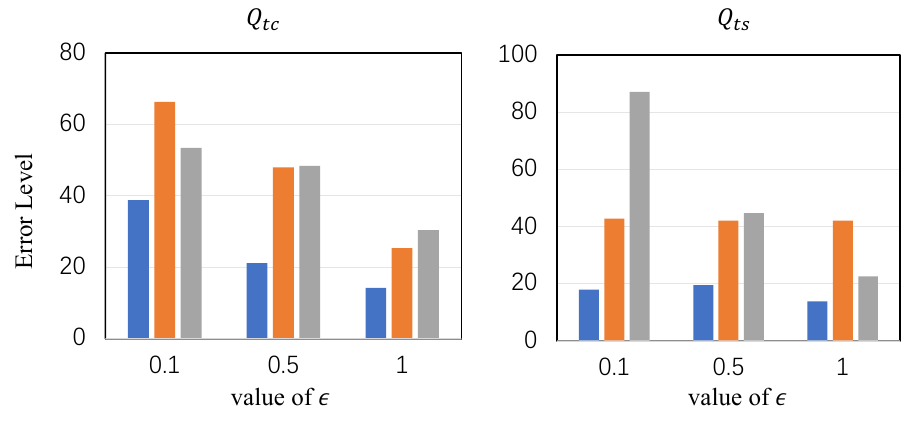}\vspace{-2ex}
  \caption{Error levels of various mechanisms on TPC-H queries by varying $\epsilon$.}\label{fig_11}\vspace{-2ex}
\end{figure}

\begin{figure*}[t]
  \centering
  \includegraphics[width=0.8\linewidth,height=7.5cm]{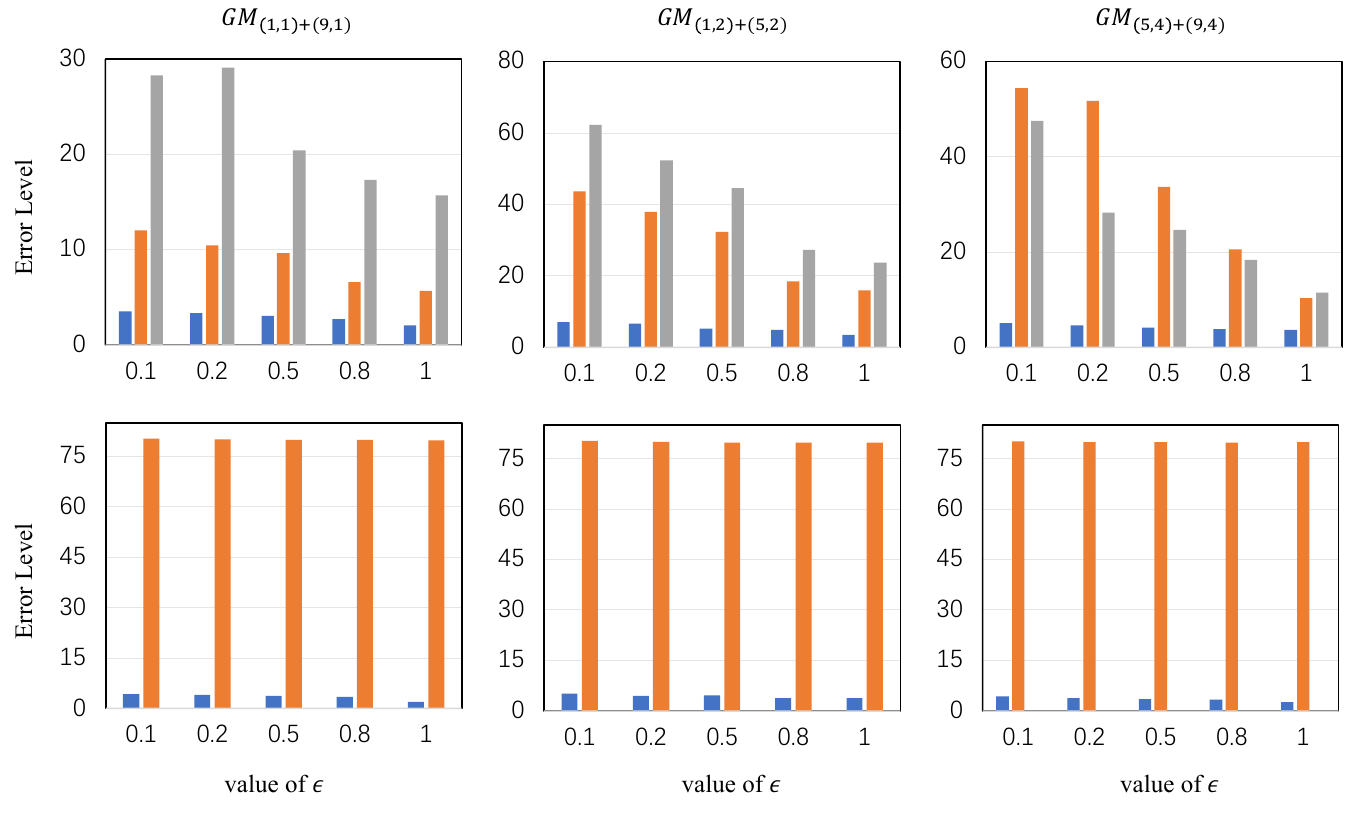}\vspace{-2ex}
  \caption{Error level of PM, R2T, LS for mixture of Gaussian distributions on $Q_{c3}$(top) and $Q_{s3}$(bottom) with different skewed parameter.}\label{fig_12}\vspace{-1ex}
\end{figure*}

\noindent\emph{\textbf{Efficiency.}} The running time of all mechanisms over the $k$-star counting queries are shown in Table~\ref{tab:t2}. On \textbf{Deezer} dataset, across all the compared solutions, R2T can only complete within the 3-hour time limit on $3$-star queries, although it achieves smaller errors on $2$-star queries than TM. Compared with R2T and TM, PM is much faster (at least 40 times faster than TM) as it does not require additional data truncation. Both R2T and TM exceed the time limit on \textbf{Amazon} dataset, which can be attributed to their increased number of joins on this larger-scale dataset. Additionally, R2T needs to solve linear programming problems to determine truncation thresholds and TM involves local sensitivity computation, both of which lead to extra computational overhead.  On the other hand, as the running time on SSB does not vary much either across approaches or under different privacy budgets, as have been shown in Figure~\ref{fig_3} and ~\ref{fig_6}, we choose not to explicitly report it in Table~\ref{tab:t1}.

\noindent\emph{\textbf{Scalability.}} In addition, we also test the scalability of the approaches by varying the volume of the database, using SSB with scale factors ranging from 0.25 to 1. The results are shown in Figure~\ref{fig_3} and ~\ref{fig_6}. Obviously, the error of PM barely increases with the data size. The reason is that our error only depends on the domain size of attributes in queries, which does not change much by the scale of SSB data. On the other hand, the behavior of R2T is more complicated. For $Q_{c2}$ and $Q_{c3}$, its error decreases first but then increases later; for $Q_{c4}$, its error increases first but then decreases later. The reason is that R2T needs to choose the optimal result based on the truncation threshold, which is closely related to the scale of the database instance. The utility of LS linearly increase with the data size as expected. In terms of running time, all mechanisms linearly increases with the data size, among which the increment of PM is smaller. Compared with the basic solutions, DP-starJ shows superior performance in various star-join query types.

\emph{The impact of Domain size.} To further evaluate the impact of domain size on PM, we extended the star-join count query on the SSB dataset, and set up five queries with different domain value combinations involving two dimension tables. The results are shown in Figure~\ref{fig_10}. Due to the increase in noise with the domain size, the error of the PM will experience a slight increase. When PM perturbs the predicate, its perturbation result is still within the domain value range, which weakens the impact of noise on the results to a certain extent. In addition, the error of PM is still orders of magnitude smaller than R2T and LS.

\emph{Different distributions.} As shown in Figures~\ref{fig_3} and ~\ref{fig_6}, the performance of PM has differences between count and sum queries. In order to further investigate the reasons, we construct data instances following different distributions based on the SSB dataset. The results are shown in Figure~\ref{fig_9}. Firstly, the PM performs best on Uniform distribution, and the error gradually increases as the data distribution becomes more skewed (\eg, Exponential and Gamma distributions). Secondly, for count queries, the error growth rate is higher. Lastly, with increasing data volume, the error of the PM decreases after an initial increase for sum queries. The main reason for this difference is that the result of sum queries depends on the values of the data itself, while the result of count queries depends on the data distribution. To further justify the impact of skewed data on the PM, extensive experiments are conducted by using data following a mixture of Gaussian distributions with different parameters. The results are shown in Figure~\ref{fig_12}. It is obvious that PM has a greater impact on count queries on skewed data. This observation partially suggests that count query results are more dependent on the data distribution.

\emph{Dependency on $GS_Q$.} Our last set of experiments examines the effect $GS_Q$ brings to the utilities of PM, R2T, and LS. We conduct experiments using counting queries with different values $GS_Q$. The results are shown in Figure~\ref{fig_4}. It is observed that PM is insensitive with $GS_Q$ as $GS_Q$ of PM is only related to the queries. When $GS_Q$ increases, the errors of R2T and LS increase rapidly.

\noindent\emph{\textbf{Evaluation on snowflake query.}}
o illustrate the effect of PM on snowflake queries, we select two queries from the TPC-H benchmark, referred to as $Q_{tc}$ and $Q_{ts}$, which are count and sum queries, respectively. The results are shown in Figure~\ref{fig_11}, it is observed that PM outperforms both R2T and LS.

\section{Conclusions}
In this paper, we have presented a novel solution to answer star-join query under differential privacy. We have proposed the definitions of neighboring database instances in different cases of star-join, taking into account the non-trivial number of foreign key constraints. Inspired by the latest output mechanism framework, we have proposed DP-starJ under \textsc{dp} for answering
star-join queries, in which we have designed a new mechanism using predicate perturbation to achieve reasonable utility, efficiency, and scalability.


\begin{acks}
This work was supported by the National Natural Science Foundation of China 61972309, 62272369, 62206207.
\end{acks}

\balance
\bibliographystyle{ACM-Reference-Format}
\bibliography{refer}

\appendix
\section{List of queries and their domain sizes}
We provide the detailed queires, predicates and their corresponding domain sizes on SSB queries and $k$-star queries in this section.

\subsection{The SSB queries} 

$Q_{c1}: 7.$ The domain size of predicate ${Date}.{year}$ is 7.
\begin{lstlisting}[language=SQL,keywordstyle=\color{blue},mathescape,basicstyle=\ttfamily,escapeinside=\{\},showstringspaces=false]
  SELECT count($*$) FROM {Date}, Lineorder 
  WHERE Lineorder.orderdate = {Date}.DK
    AND {Date}.{year} = 1993;  
\end{lstlisting} 

$Q_{c2}: 25 \times 5$, which means that the domain sizes of predicates $Part.category$ and $Supplier.region$ are 25 and 5, respectively.
\begin{lstlisting}[language=SQL,keywordstyle=\color{blue},mathescape,basicstyle=\ttfamily,escapeinside=\{\},showstringspaces=false]
  SELECT count($*$) 
  FROM {Date}, Lineorder, Part, Supplier 
  WHERE Lineroder.SK = Supplier.SK
    AND Lineroder.PK = Part.PK 
    AND Lineorder.orderdate = {Date}.DK
    AND Part.category = 'MFGR#12'
    AND Supplier.region = 'AMERICA';  
\end{lstlisting} 

$Q_{c3}: 5 \times 5 \times 7$.
\begin{lstlisting}[language=SQL,keywordstyle=\color{blue},mathescape,basicstyle=\ttfamily,escapeinside=\{\},showstringspaces=false]
  SELECT count($*$) 
  FROM {Date}, Lineorder, Customer, Supplier 
  WHERE Lineroder.SK = Supplier.SK
    AND Lineroder.CK = Customer.CK 
    AND Lineorder.orderdate = {Date}.DK
    AND Customer.region = 'ASIA'
    AND Supplier.region = 'ASIA'
    AND {Date}.{year} between 1992 and 1997;  
\end{lstlisting} 

$Q_{c4}: 5 \times 25 \times 7 \times 5$.  
\begin{lstlisting}[language=SQL,keywordstyle=\color{blue},mathescape,basicstyle=\ttfamily,escapeinside=\{\},showstringspaces=false]
  SELECT count($*$) 
  FROM {Date}, Lineorder, Customer, Part, Supplier 
  WHERE Lineroder.SK = Supplier.SK
    AND Lineroder.PK = Part.PK 
    AND Lineroder.CK = Customer.CK 
    AND Lineorder.orderdate = {Date}.DK
    AND Customer.region = 'AMERICA'
    AND Supplier.nation = 'UNITED STATES'
    AND {Date}.{year} between 1997 and 1998
    AND Part.mfgr = 'MFGR#1' 
    OR Part.mfgr = 'MFGR#2';  
\end{lstlisting}

$Q_{s2}: 25 \times 5$. 
\begin{lstlisting}[language=SQL,keywordstyle=\color{blue},mathescape,basicstyle=\ttfamily,escapeinside=\{\},showstringspaces=false]
  SELECT sum(Lineorder.revenue) 
  FROM {Date}, Lineorder, Part, Supplier 
  WHERE Lineroder.SK = Supplier.SK
    AND Lineroder.PK = Part.PK 
    AND Lineorder.orderdate = {Date}.DK
    AND Part.category = 'MFGR#12'
    AND Supplier.region = 'AMERICA';  
\end{lstlisting}

$Q_{s3}: 5 \times 5 \times 7$. 
\begin{lstlisting}[language=SQL,keywordstyle=\color{blue},mathescape,basicstyle=\ttfamily,escapeinside=\{\},showstringspaces=false]
  SELECT sum(Lineorder.revenue) 
  FROM {Date}, Lineorder, Customer, Supplier 
  WHERE Lineroder.SK = Supplier.SK
    AND Lineroder.CK = Customer.CK 
    AND Lineorder.orderdate = {Date}.DK
    AND Customer.region = 'ASIA'
    AND Supplier.region = 'ASIA'
    AND {Date}.{year} between 1992 and 1997;  
\end{lstlisting} 

$Q_{s4}: 5 \times 25 \times 7 \times 5$.  
\begin{lstlisting}[language=SQL,keywordstyle=\color{blue},mathescape,basicstyle=\ttfamily,escapeinside=\{\},showstringspaces=false]
  SELECT sum(Lineorder.revenue) 
  FROM {Date}, Lineorder, Customer, Part, Supplier 
  WHERE Lineroder.SK = Supplier.SK
    AND Lineroder.PK = Part.PK 
    AND Lineroder.CK = Customer.CK 
    AND Lineorder.orderdate = {Date}.DK
    AND Customer.region = 'AMERICA'
    AND Supplier.nation = 'UNITED STATES'
    AND {Date}.{year} between 1997 and 1998
    AND Part.mfgr = 'MFGR#1' 
    OR Part.mfgr = 'MFGR#2';  
\end{lstlisting}

$Q_{g2}: 25 \times 5$. 
\begin{lstlisting}[language=SQL,keywordstyle=\color{blue},mathescape,basicstyle=\ttfamily,escapeinside=\{\},showstringspaces=false]
  SELECT sum(Lineorder.revenue), {Date}.{year}, Part.brand 
  FROM {Date}, Lineorder, Part, Supplier 
  WHERE Lineroder.SK = Supplier.SK
    AND Lineroder.PK = Part.PK 
    AND Lineorder.orderdate = {Date}.DK
    AND Part.category = 'MFGR#12'
    AND Supplier.region = 'AMERICA'
  Group by {Date}.{year}, Part.brand
  Order by {Date}.{year}, Part.brand;  
\end{lstlisting} 

$Q_{g4}: 5 \times 25 \times 7 \times 5$.
\begin{lstlisting}[language=SQL,keywordstyle=\color{blue},mathescape,basicstyle=\ttfamily,escapeinside=\{\},showstringspaces=false]
  SELECT sum(Lineorder.revenue - Lineorder.supplycost),
    {Date}.{year}, Part.category
  FROM {Date}, Lineorder, Customer, Part, Supplier 
  WHERE Lineroder.SK = Supplier.SK
    AND Lineroder.PK = Part.PK 
    AND Lineroder.CK = Customer.CK 
    AND Lineorder.orderdate = {Date}.DK
    AND Customer.region = 'AMERICA'
    AND Supplier.nation = 'UNITED STATES'
    AND {Date}.{year} between 1997 and 1998
    AND Part.mfgr = 'MFGR#1' OR Part.mfgr = 'MFGR#2'
  Group by {Date}.{year}, Part.category
  Order by {Date}.{year}, Part.category;  
\end{lstlisting}

\subsection{$k$-star queries}
The $k$-star queries on \textbf{Deezer} and \textbf{Amazon} datasets as follows: The $k$-star counting queries predicate refers to its node range, so the domain size is its number of vertices.

\textbf{Deezer}: the domain size of $k$-star queries is 144000. 

$Q_{2\ast}: $
\begin{lstlisting}[language=SQL,keywordstyle=\color{blue},mathescape,basicstyle=\ttfamily,escapeinside=\{\},showstringspaces=false]
  SELECT count($*$) 
  FROM Edge AS R1, Edge AS R2
  WHERE R1.from_id = R2.from_id 
    AND R1.to_id < R2.to_id 
    AND R1.from_id between 1 and 144000;
\end{lstlisting}

$Q_{3\ast}: $
\begin{lstlisting}[language=SQL,keywordstyle=\color{blue},mathescape,basicstyle=\ttfamily,escapeinside=\{\},showstringspaces=false]
  SELECT count($*$) 
  FROM Edge AS R1, Edge AS R2, Edge AS R3 
  WHERE R1.from_id = R2.from_id 
    AND R1.from_id = R3.from_id 
    AND R1.to_id < R2.to_id 
    AND R2.to_id < R3.to_id 
    AND R1.from_id between 1 and 144000
    AND R3.from_id between 1 and 144000;
\end{lstlisting}

\textbf{Amazon}: the domain size of $k$-star queries is 335000. 

$Q_{2\ast}: $
\begin{lstlisting}[language=SQL,keywordstyle=\color{blue},mathescape,basicstyle=\ttfamily,escapeinside=\{\},showstringspaces=false]
  SELECT count($*$) 
  FROM Edge AS R1, Edge AS R2 
  WHERE R1.from_id = R2.from_id 
    AND R1.to_id < R2.to_id 
    AND R1.from_id between 1 and 335000;
\end{lstlisting}

$Q_{3\ast}: $
\begin{lstlisting}[language=SQL,keywordstyle=\color{blue},mathescape,basicstyle=\ttfamily,escapeinside=\{\},showstringspaces=false]
  SELECT count($*$) 
  FROM Edge AS R1, Edge AS R2, Edge AS R3 
  WHERE R1.from_id = R2.from_id 
    AND R1.from_id = R3.from_id 
    AND R1.to_id < R2.to_id 
    AND R2.to_id < R3.to_id 
    AND R1.from_id between 1 and 335000
    AND R3.from_id between 1 and 335000;
\end{lstlisting}

\end{document}